\newcommand{\ttot}{T_{\scriptscriptstyle Total}}
\newcommand{\pfcccs}{P_{\scriptstyle f}^{\scriptscriptstyle CCCS}}
\newcommand{\pdcccs}{P_{\scriptstyle d}^{\scriptscriptstyle CCCS}}
\newcommand{\pfcccsdet}{P_{\scriptstyle f}^{\scriptscriptstyle CCCS, det}}
\newcommand{\pdcccsdet}{P_{\scriptstyle d}^{\scriptscriptstyle CCCS, det}}
\newcommand{\rcccs}{R_{\scriptscriptstyle CCCS}}
\newcommand{\ecccs}{E_{\scriptscriptstyle CCCS}}
\newcommand{\eecccs}{EE_{\scriptscriptstyle CCCS}}
\newcommand{\trcccs}{\tilde{R}_{\scriptscriptstyle cccs}}
\newcommand{\trcccsdet}{\tilde{R}_{\scriptscriptstyle cccs, det}}
\newcommand{\tecccs}{\tilde{E}_{\scriptscriptstyle cccs}}
\newcommand{\tecccsdet}{\tilde{E}_{\scriptscriptstyle cccs, det}}
\newcommand{\teecccs}{\tilde{EE}_{\scriptscriptstyle cccs}}
\newcommand{\teecccsdet}{\tilde{EE}_{\scriptscriptstyle cccs, det}}
\newcommand{\meuvec}{\boldsymbol{\mu}}
\newcommand{\meas}{\boldsymbol{\phi}}
\newcommand{\wvec}{\mathbf{w}}
\newcommand{\xvec}{\mathbf{x}}
\newcommand{\xmat}{\mathbf{X}}
\newcommand{\yvec}{\mathbf{y}}
\newcommand{\zvec}{\mathbf{z}}
\newcommand{\ymat}{\mathbf{Y}}
\newcommand{\zmat}{\mathbf{Z}}
\newcommand{\varn}{\sigma_w^2}
\newcommand{\vars}{\sigma_x^2}
\newcommand{\vark}{\sigma_k^2}
\newcommand{\varzero}{\sigma_0^2}
\newcommand{\varone}{\sigma_1^2}
\newcommand{\eyep}{\mathrm{I}_{\scriptscriptstyle P}}
\newcommand{\phn}{\pi_0}
\newcommand{\pho}{\pi_1}
\newcommand{\phat}{\widehat{\mathbf{P}}}
\newcommand{\comp}{\mathtt{c}}
\newcommand{\nbnd}{N_{\scriptscriptstyle \text{$\max$}}}
\newcommand{\compbnd}{\mathtt{c}_{\scriptscriptstyle \text{$\max$}}}
\newcommand{\compupbnd}{\mathtt{c}_{\scriptscriptstyle \text{$UB$}}}
\newtheorem{theorem}{Theorem}
\title{Energy Efficiency Analysis of Collaborative Compressive Sensing Scheme in Cognitive Radio Networks}
\author{Rajalekshmi Kishore,~\IEEEmembership{Student Member,~IEEE}, Sanjeev~Gurugopinath,~\IEEEmembership{Member,~IEEE},\\Sami Muhaidat, \IEEEmembership{Senior Member,~IEEE}, Paschalis C.~Sofotasios,~\IEEEmembership{Senior Member,~IEEE},\\Mehrdad~Dianati,~\IEEEmembership{Senior Member,~IEEE}, and Naofal Al-Dhahir,~\IEEEmembership{Fellow,~IEEE}

	\thanks{This work has appeared in part in \cite{Kishore_Globecom_2018}.} 
	
	\thanks{R. Kishore is with the Department of Electrical and Electronics Engineering, BITS Pilani, K.~K.~Birla Goa Campus, Goa 403726, India (email: {\rm lekshminair2k@yahoo.com}).}

	\thanks{S. Gurugopinath is with the Department of Electronics and Communication Engineering, PES University, Bengaluru 560085, India, (email: {\rm sanjeevg@pes.edu}).}

	\thanks{S.  Muhaidat and P. C. Sofotasios are with the Department of Electrical and Computer Engineering, Khalifa University of Science and Technology, Abu Dhabi, UAE  (emails: {\rm \{muhaidat, p.sofotasios\}@ieee.org}).}
	
	\thanks{Mehrdad~Dianati is with the Warwick Manufacturing Group, University of Warwick, United Kingdom. (email: {\rm   M.Dianati@warwick.ac.uk }).}

	\thanks{N. Al-Dhahir is with the Department of Electrical Engineering, University of Texas at Dallas, TX 75080 Dallas, USA (e-mail: {\rm aldhahir@utdallas.edu}).}	
}
\begin{document}

 \maketitle
 
\begin{abstract}
In this paper, we investigate the energy efficiency of conventional collaborative compressive sensing (CCCS) scheme, focusing on balancing the tradeoff between energy efficiency and detection accuracy in cognitive radio environment. In particular, we derive the achievable throughput, energy consumption and energy efficiency of the CCCS scheme, and formulate an optimization problem to determine the optimal values of parameters which maximize the energy efficiency of the CCCS scheme. The maximization of energy efficiency is proposed as a multi-variable, non-convex optimization problem, and we provide approximations to reduce it to a convex optimization problem. We highlight that errors due to these approximations are negligible. Later, we analytically characterize the tradeoff between dimensionality reduction and collaborative sensing performance of the CCCS scheme -- the implicit tradeoff between energy saving and detection accuracy, and show that the loss due to compression can be recovered through collaboration which improves the overall energy efficiency.
\end{abstract}

\begin{IEEEkeywords}
Achievable throughput, collaborative compressive sensing, energy consumption, energy efficiency, spectrum sensing.
\end{IEEEkeywords}

\IEEEpeerreviewmaketitle

\section{Introduction}
\label{sec:introduction}
With growing concern about environmental issues and an emerging green communications paradigm (\cite{Mowla_IEEETGCN_2017}, \cite{Huang_IEEECST_2015}) in wireless communications, the design of cognitive radio (CR) networks (CRNs) have to be considered from the energy efficiency perspective (\cite{Zheng_IEEETGCN_2017}, \cite{Yousefvand_IEEETGCN_2017}). A fundamental feature of a CR is spectrum sensing \cite{Yucek_IEEECST_2009}, which is typically carried out by the CR users or secondary users (SU) to find the unused licensed resources for implementing a CRN or a secondary network. 

It is well-understood that larger the bandwidth of the licensed or primary user (PU) spectrum, the SUs will have more transmission opportunity for communication. Towards this end, wideband spectrum sensing (WSS) \cite{Bruno_IEEETC_2018},\cite{Cao_IEEETSP_2018} has attracted considerable research attention, to design efficient algorithms for detecting multiple bands simultaneously. Typically, the duration of a spectrum sensing slot includes two phases, namely, the sensing phase and the data transmission phase. If the sensing phase is not optimally designed, the energy consumption of SUs increases. Such a design problem is of primary importance for WSS \cite{Ali_IEEETVT_2017}. The energy consumption for spectrum sensing, mainly
caused by the analog-to-digital converter (ADC), is proportional to the sensing time and the sampling rate (\cite{Xiong_IEEETVT_2017}, \cite{Zhao_IEEETMC_2017}). However, it has been observed that at a given time instant, only a small number of frequency bins (channels) across the entire bandwidth are occupied by PUs. In other words, the occupancy of the PU network over a wideband is sparse in the frequency-domain. Such inherent sparsity of the spectrum is taken as an advantage in compressed sensing (CS)-based approaches, which was originally envisioned to reduce the sampling rate below the Nyquist rate \cite{Sharma_IEEE_2016}. Based on this key observation, the authors in \cite{SALAHDINE_ADhoc_2016} present an extensive survey on compressive sensing techniques and discuss about the classification of these techniques, their potential applications and metrics to optimally design and evaluate their performances in the context of CRNs. To summarize, CS, when compared to the conventional WSS, reduces the sampling rate to below Nyquist rate \cite{Ma_IEEEVT_2017}, which in turn reduces the sensing time, favoring considerable saving in energy consumption. For this reason, the CS-based spectrum sensing methods have been proposed for improving the energy efficiency \cite{Arienzo_IEEETGCN_2017} in CRNs.

Despite its attractiveness as an energy efficient sensing technique, CS suffers from a few major drawbacks which limit its applicability in practice. A CS based sensing scheme incurs a considerable performance loss due to compression when compared to the conventional sensing scheme, while detecting non-sparse signals. This performance loss is characterized in terms of the probabilities of false-alarm and signal detection. Recently, the authors in \cite {varshney_IEEE_2017} proposed a collaborative compressive detection framework, in which group of spatially distributed nodes sense the presence of phenomenon independently, and send a compressed summary of observations to a fusion center (FC) where a global decision is made about the presence or absence of the phenomenon. This technique was designed to compensate for the performance loss due to compression, and it was shown that the amount of loss can be improved and recovered through collaborative detection. In particular, it was shown that as the the degree of compression is decreased (keeping number of collaborating nodes fixed), or as the number of collaborating nodes is increased (keeping the degree of compression fixed), the overall probability of error in detection can be made arbitrarily small. However, the study in \cite{varshney_IEEE_2017} never addressed energy efficiency and was restricted to the detection performance of the collaborative compressive detection scheme, in a non-CR context.

In this work, we have shown that a similar trend observed in \cite{varshney_IEEE_2017} can be seen in CRNs, with energy efficiency as a metric. In particular, we derive the expressions for the average energy consumption and the average achievable throughput of a conventional collaborative compressive sensing (CCCS) scheme. Next, we derive an expression for the energy efficiency of CCCS, and formulate an optimization problem that maximizes the energy efficiency, subjected to constraints on probability of detection and probability of false-alarm. We provide some approximations to reduce the proposed non-convex optimization problem to a convex optimization problem. Later, we establish that these approximations are sufficiently
accurate, and result only in an insignificant performance loss. The motivation to consider the proposed CCCS is threefold. First, it reduces the sampling rate below the Nyquist rate, which results in a shorter sensing duration and much lesser energy consumption. Secondly, by exploiting the collaboration between the sensors, the achievable detection performance can be maintained to a target limit. Finally, since it promotes energy saving and ensures a desirable detection performance, the energy efficiency is guaranteed. In the process of determining optimal system parameters such as the degree of compression (or the compression ratio) and number of collaborative nodes, we seek the answer to the following question: For a given compression ratio, what would be the minimum number of collaborative nodes required to maximize the energy efficiency of the CRN$?$\footnote{A related question would be that given a number of collaborative nodes $N$, what is the maximum allowable degree of compression, such that the energy efficiency of the network is maximized?}

On a related note, the energy efficiency using compressed sensing in wideband CRNs was studied in \cite{Zhao_eurosipSpringer_2016}, where the authors show that by optimizing the sampling rate, energy efficiency of the network can be maximized. It was also shown that as the sparsity of the wideband spectrum increases (that is, as the associated vector becomes more and more sparse), the energy consumption decreases, and the energy efficiency increases. But the analysis in \cite{Zhao_eurosipSpringer_2016} was restricted to strictly sparse signals. However, in this work, we have considered the utility of both compressed sensing and collaborative sensing to guarantee dimensionality reduction and detection performance, respectively, that yields improvement in energy efficiency to a greater extent. Moreover, our approach is also applicable to non-sparse signals. To the best of our knowledge, such an analysis on energy efficiency for the CCCS scheme has not been considered earlier in the literature.

The main contributions of this paper are as follows.
\begin{itemize}
	\item Energy efficiency of the CCCS scheme for CRNs is studied, in terms of the average achievable throughput and the average energy consumption in the network.
	\item Maximization of the energy efficiency is posed as a non-convex optimization problem, to find the number of sensors required for collaboration (or the degree of compression), that satisfies a given constraints on probability of false-alarm and probability of detection.
	\item A study on the effect of reducing the number of samples due to CS, and its impact on the energy efficiency is carried out, considering the random and deterministic PU signal models. In both cases, we show that the energy efficiency is improved by either decreasing the compression ratio, or by increasing the number of collaborative nodes.
	\item Through numerical results, we compare the performances of the conventional collaborative sensing (CCS) and CCCS schemes in terms of the energy efficiency, and highlight the regimes where CCCS outperforms the CCS scheme. Such an improvement in energy efficiency of the CCCS scheme is shown to be due to a significant amount of saving in the energy consumption, with a relatively insignificant performance loss due to detection accuracy, in comparison to the CCS scheme.
\end{itemize}

The remainder of this paper is organized as follows. We propose the system model for CCCS scheme and review the CCCS and CCS schemes for random PU signal case in Sec.~\ref{SecSysModel}.  The optimization problem to maximize the energy efficiency of the CCCS scheme is proposed in Sec.~\ref{SecCCCEEProblem}, and associated approximations, reformulation and detailed analysis are provided in Sec.~\ref{SecCCCSEEApprox}. A similar energy efficiency formulation, approximations, and analysis for a deterministic PU signal is presented in Secs.~\ref{EEforDetersignal}. Numerical results and discussion on performance comparison are presented in Sec.~\ref{SecResults} and concluding remarks are provided in Sec.~\ref{SecConc}.

\section{System Model} \label{SecSysModel}

\begin{figure*} \vspace*{4pt}
	\centering
	\begin{minipage}{.5\textwidth}
		\centering
		\includegraphics[width=1\linewidth]{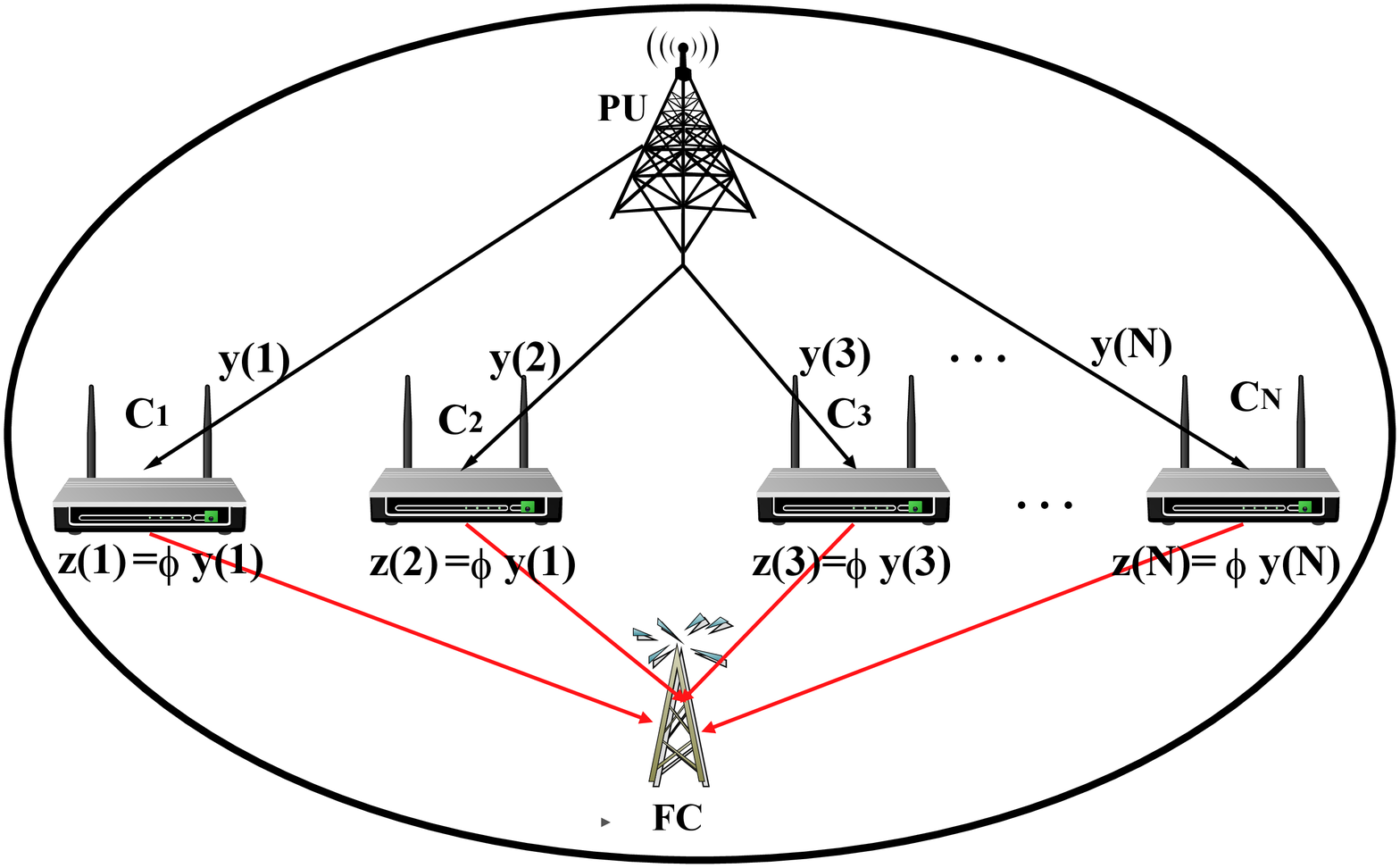}
		\centering
	\end{minipage}%
	\begin{minipage}{.5\textwidth}
		\centering
		\includegraphics[width=1\linewidth]{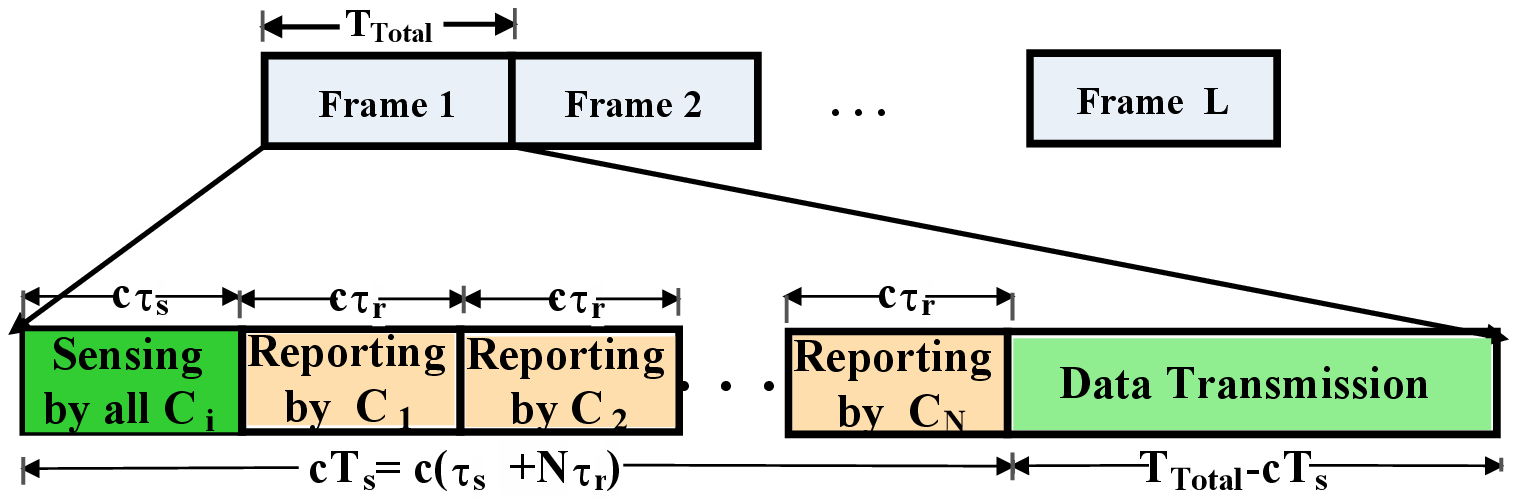}
		\centering
	\end{minipage}
	\captionof{figure}{(a) System model for collaborative conventional compressive sensing (CCCS) scheme ~ (b) Time slot structure for CCCS scheme.}
	\label{fig:test2}
\end{figure*}
We first describe the conventional cooperative sensing (CCS) framework. Consider a CRN -- as depicted in Fig.~\ref{fig:test2}(a) -- with $N$ CR nodes denoted by $C_1,\ldots,C_N$ that record $P$ observations each from a licensed band owned by a primary user (PU). These nodes forward their observation vectors over a lossless link to a fusion center (FC), where they are fused to make an overall decision on the availability of the primary spectrum. The hypothesis testing problem governing this scenario can be written as
\begin{align}
 & \mathcal{H}_{0}: \yvec(n) = \wvec(n) \nonumber \\
 & \mathcal{H}_{1}: \yvec(n) = \xvec(n) + \wvec(n), ~~ n=1,\ldots,N, \label{MainHIProb}
\end{align}
where $\wvec(n)$ represents the $P \times 1$ noise vector, and $\xvec(n)$ represents a $P \times 1$ primary signal vector, whose entries are assumed to be i.i.d.~Gaussian random variables with zero mean and variance $\varn$ and $\vars$, respectively. That is, if $\mathcal{N}(\boldsymbol{\meuvec}, \boldsymbol{\Sigma})$ denotes a Gaussian random vector with mean vector $\boldsymbol{\meuvec}$ and covariance matrix $\boldsymbol{\Sigma}$, then $\wvec(n) \sim \mathcal{N}(\boldsymbol{0},\varn \eyep)$, and $\xvec(n) \sim \mathcal{N}(\boldsymbol{0},\vars \eyep)$, where $\eyep$ is a $P \times P$ identity matrix.

Next, we focus on the conventional collaborative compressive sensing (CCCS) framework. Here, instead of $P \times 1$ vector $\yvec(n)$, each node sends an $M \times 1$ compressed vector $\zvec(n)$ to the FC, with $M < P$. The collection of these $M$-length universally sampled observations is given by $\{\zvec(n) = \meas \yvec(n), n=1,\ldots,N\}$, where $\meas$ is an $M \times P$ fat compression matrix, which is assumed to be the same across all nodes. With this setup, the problem in \eqref{MainHIProb} reduces to
\begin{align}
& \mathcal{H}_0: \zvec(n) = \meas \wvec(n) \nonumber \\
& \mathcal{H}_1: \zvec(n) = \meas(\xvec(n) +  \wvec(n)), ~~ n=1,\ldots,N, \label{CompHIProb}
\end{align}
The FC receives the observation matrix $\zmat = [\zvec(1) \cdots \zvec(N)]$, and makes a decision on the availability of the primary spectrum, by employing the likelihood ratio test (LRT), which is Neyman-Pearson optimal. The LRT at the FC, with a detection threshold $\lambda_L$, is given as
\begin{align}
 \prod_{n=1}^N \frac{f(\zvec(n); \mathcal{H}_1)}{f(\zvec(n); \mathcal{H}_0)} \overset{\mathcal{H}_1}{\underset{\mathcal{H}_0}{\gtrless}} \lambda_L, \label{LRTeqn}
\end{align}
where $f(\zvec(n); \mathcal{H}_0)$ and $f(\zvec(n); \mathcal{H}_1)$ represent the PDF of $\zvec(n)$ under $\mathcal{H}_0$ and $\mathcal{H}_1$, and are respectively given by
\begin{align}
& f(\zvec(n); \mathcal{H}_0) = \dfrac{\exp\left(-\frac{\zvec^T(n) (\varn \meas \meas^T)^{-1} \zvec(n)}{2}\right)}{(2\pi)^{M/2} |\varn \meas \meas^T|^{1/2}} ~~
\vspace*{2cm} 
\\ 
& f(\zvec(n); \mathcal{H}_1) = \dfrac{\exp\left(-\frac{\zvec^T(n) ((\vars+\varn) \meas \meas^T)^{-1} \zvec(n)}{2}\right)}{(2\pi)^{M/2} |(\vars+\varn) \meas \meas^T|^{1/2}}.
\end{align}
Substituting in \eqref{LRTeqn} and simplifying, yields
\begin{align}
&\left[\frac{|\varn \meas \meas^T|^{1/2}}{|(\vars+\varn) \meas \meas^T|^{1/2}}\right]^N \hspace{-0.2cm} \exp \hspace{-0.1cm} \left[ \hspace{-0.075cm} - \hspace{-0.1cm} \sum_{n=1}^N \left(\frac{\zvec^T(n) (\meas \meas^T)^{-1} \zvec(n)}{2(\vars+\varn)}\right. \right. \nonumber \\
& ~~~~~~~~~~~~~~~~~~~~~~~~~ \left. \left. - \frac{\zvec^T(n) (\meas \meas^T)^{-1} \zvec(n)}{2\varn} \right)\right] \overset{\mathcal{H}_1}{\underset{\mathcal{H}_0}{\gtrless}} \lambda_L. \label{LRTeqn1}
\end{align}
Recalling that $\zvec(n) = \meas \yvec(n)$, it is easy to see that the above test reduces to the form
\begin{align}
T(\ymat) \triangleq \sum \limits_{n=1}^{N} \yvec^T(n) \meas^T (\meas \meas^T)^{-1} \meas \yvec(n) \overset{\mathcal{H}_1}{\underset{\mathcal{H}_0}{\gtrless}} \lambda, \label{SuffTest}
\end{align}
 where $\lambda \triangleq \log \left \{\left [ \frac{|\sigma_x^2+\sigma_w^2|}{|\sigma_w^2|} \right ]^{N/2} \lambda_L \right \}\left \{ \frac{2\sigma_w^2(\sigma_w^2+\sigma_x^2)}{\sigma_x^2} \right \}$ is the detection threshold, which is chosen based on the Neyman-Pearson criterion. To simplify performance characterization of the above test in \eqref{SuffTest}, we assume that the linear mapping $\meas$ satisfies the $\epsilon$-embedding property, as considered in \cite{varshney_IEEE_2017}. However, designing such a $\meas$ that satisfies the $\epsilon$-embedding property is beyond the scope of the current study. 

Let $\gamma \triangleq \frac{\vars}{\varn}$ denote the average received SNR at a CR node, and $\phat \triangleq \meas^T(\meas \meas^T)^{-1}\meas$ the  projection matrix on the row space of $\meas$. Following the central limit theorem for large values of the product $NM$, it can be shown that the test statistic under both $\mathcal{H}_0$ and $\mathcal{H}_1$ is distributed as
\begin{eqnarray} \label{teststatdet}
\frac{T (\ymat)}{\vark} \overset{NM \rightarrow \infty}{\sim} \left\lbrace \begin{array}{cl} \mathcal{N}(NM,2NM), & \hbox{under}\ \mathcal{H}_{0} \\ \mathcal{N}(N M,2 N M), & \hbox{under}\ \mathcal{H}_{1} \end{array}\right. \hspace{-0.15cm}
 \end{eqnarray}
where $k=0,1$, that is, $\varzero=\varn$, and $\varone=\vars+\varn$. Let $\comp \triangleq \frac{M}{P} \in (0,1)$ denote the compression ratio. Based on \eqref{teststatdet}, the probability of false-alarm at the FC is given by
\begin{align}
& \pfcccs \triangleq P(T(\ymat) > \lambda |\mathcal{H}_{0}) = Q\left( \frac{\frac{\lambda}{P \varn}-\comp N}{\sqrt{2 \comp N P}} \right). \label{pfeqn}
\end{align}
Similarly, the probability of detection at the FC is given by
\begin{align}
&\pdcccs \hspace{-0.1cm} = \hspace{-0.1cm} P(T(\ymat)>\lambda|\mathcal{H}_{1}) = \hspace{-0.1cm} Q\left(\hspace{-0.1cm} \frac{\frac{\lambda}{P (\vars+\varn)}-\comp N}{\sqrt{2 \comp N P}} \hspace{-0.1cm} \right). \label{pdeqn}
\end{align}

Note that the expressions for $\pfcccs$ and $\pdcccs$ depend on the value of $\comp$, which dictates the loss in the detection accuracy due to the compressed measurements $\{\zvec(n), n=1,\ldots,N\}$. The time slot structure indicating the sensing, reporting and total duration for the CCCS scheme is as shown in Fig.~\ref{fig:test2}(b). Since the detection accuracy is also a function of $N$, it can be improved by increasing $N$. In other words, the loss in detection accuracy due to compression can be recovered by increasing the number of collaborative nodes, $N$. This observation is shown in Fig.~\ref{PdVsSNR}, where the variation of $\pdcccs$ across $\gamma$ is plotted, with $\pfcccs=0.1$, for different values of $\comp$ and $N$. The case of $\comp=1$ corresponds to Nyquist sampling, i.e., the CCS approach. As $\comp$ decreases, $\pdcccs$ decreases, which can be increased to a desired level by increasing $N$. Interestingly, as $N$ increases, even though the probability of detection -- and consequently, the achievable throughput of the secondary network -- increases, the total energy consumption in the secondary network also increases, thereby decreasing the energy efficiency. Towards this end, it is of  paramount importance to optimally determine  system parameters $\comp$ and $N$ in order to maximize the energy efficiency. In other words, we seek to answer the following question. Given a CR network with $N$ nodes, how small can the compression ratio $\comp$  be, such that the energy efficiency is maximized? To answer this question -- which is the main contribution of this paper, we next derive expressions for the average achievable throughput, average energy consumption and energy efficiency of the CR network, and formulate an optimization problem to maximize the energy efficiency.

\begin{figure}
	\centering
	\vspace{-0.5cm}
	\includegraphics[width=3.5in, height=2.5in]{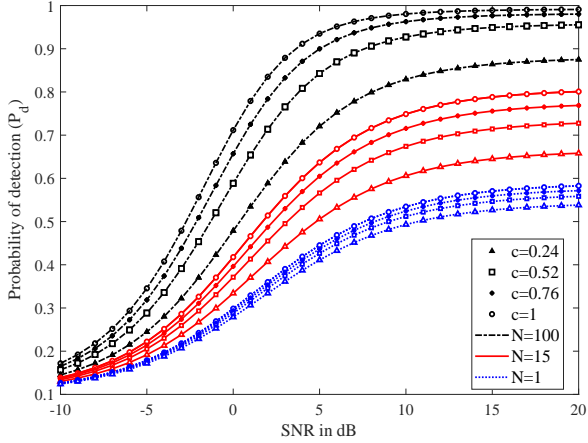} 
	\caption{Variation of probability of detection, $\pdcccs$, for different values of average SNR, $\gamma$. Probability of false-alarm, $\pfcccs=0.1$. Note that as $\comp$ decreases, $\pdcccs$ decreases. However, $\pdcccs$ can be increased to a desired level by increasing $N$.} \label{PdVsSNR}
\end{figure}

\section{Energy Efficiency and Problem Formulation} \label{SecCCCEEProblem}
In this section, our aim  is to find the optimal value of the compression ratio $\comp$, for a given $N$, such that the energy efficiency  is maximized. To this end, we first derive expressions for the average achievable throughput and the average energy consumption, and then derive the energy efficiency of the network. For the underlying CCCS, the average achievable throughput and the average energy consumption depend on the communication link between the PU node and the sensing nodes, and can be calculated based on the following scenarios, where  $\phn$ and $\pho$ denote the prior probability that the channel is vacant and occupied, respectively.

\begin{table*}[t]
	\centering
	\caption{Achievable throughput and energy consumption by the CR network employing CCCS, for  scenarios \textbf{S1}-\textbf{S4}.}
	\label{TputEconTable}
	\begin{tabular}{p{2cm}p{4cm}p{6cm}p{3cm}}\cline{1-4}
		\hline\noalign{\smallskip}\\
		\textbf{Scenario} & \textbf{Probability} & \textbf{Energy Consumed (J)}  & \textbf{Achievable Throughput (bits/Hz)}\\
		\noalign{\smallskip}\hline\noalign{\smallskip}\noalign{\smallskip}
		\textbf{S1} &	$\pho$ $\pdcccs$  & $NP_{s}\comp\tau_{s}$+$N P_{t}\comp \tau_{r}$  & 0  \\ \vspace{0.15cm}\\ 
		\cline{1-4}\\
		\textbf{S2} &$\phn$ $\pfcccs$ & $NP_{s}\comp\tau_{s}$+$N P_{t}\comp \tau_{r}$ & $-\phi \mathcal{C} (\ttot -\comp T_{s})$ \\ \vspace{0.15cm}\\
		\cline{1-4} \\
		\textbf{S3} &$\pho$ $(1-\pdcccs)$ & $NP_{s}\comp\tau_{s}$+$N P_{t}\comp \tau_{r}+P_{t}(\ttot-\comp T_{s})$  & $\kappa_c \mathcal{C} (\ttot -\comp T_{s})$  \\ \vspace{0.15cm}\\
		\cline{1-4} \\
		\textbf{S4} &$\phn$ $(1-\pfcccs)$ & $NP_{s}\comp\tau_{s}$+$N P_{t}\comp \tau_{r}+P_{t}(\ttot-\comp T_{s})$ & $ \mathcal{C} (\ttot -\comp T_{s})$  \\ \vspace{0.15cm}\\
		
		\noalign{\smallskip}\hline
	\end{tabular}
\end{table*}

In a CR network with CCCS, the average achievable throughput and the average energy consumption depend on the communication link between the PU node and the sensing nodes, which can be calculated based on four scenarios denoted by \textbf{S1}-\textbf{S4}, detailed below.
\begin{enumerate}
\item[\textbf{S1}.] The first scenario corresponds to the case when the PU is present, and the FC correctly declares its presence, which occurs with probability $\pho \pdcccs$. Hence, the CR network throughput achieved is zero.
\item[\textbf{S2}.] The second scenario covers the case when PU is absent but incorrectly declared as present by the FC, which occurs with probability  $\phn \pfcccs$. Since the CR network misses a transmission opportunity in this case, the achievable throughput in this case is calculated as $-\phi \mathcal{C} (\ttot-\comp T_{s})$, where $T_{s}= (\tau_{s} + N \tau_{r})$, $\mathcal{C}$ is the capacity of the secondary link, and $\phi \in (0,1)$ is a suitably chosen penalty factor. For simplicity, $\phi$ can be considered to be zero.
\item[\textbf{S3}.] In the third scenario, FC makes an incorrect decision that the PU is absent, when it is actually present, which occurs with probability $\pho (1-\pdcccs)$. In this case, the CR network transmits and causes interference to the PU. Even with the interference to the PU, the CR communication achieves a partial throughput of $\kappa_c \mathcal{C} (\ttot- \comp T_{s})$ units, for some $\kappa_c \in [0,1)$. Additionally, we assume that the CR nodes are located far from the PU network, such that the interference term due to PU is negligible.
\item[\textbf{S4}.] The last scenario corresponds to the case when the PU is absent and the FC makes a correct decision, which occurs with probability $\phn (1-\pfcccs)$. In this case, the achievable throughput is maximum, and is given by $\mathcal{C}(\ttot- \comp T_{s})$ units.
\end{enumerate}

The achievable throughput, along with the energy consumed in each of the above scenarios are listed in Tab.~\ref{TputEconTable}, on the top of the next page, where $P_s$ and $P_t$ denote the power required for each SU node for sensing and data transmission, respectively. Considering all the above cases, the average throughput of the CCCS scheme is given by
\begin{align} 
  & \hspace{-0.15cm} \rcccs(\lambda,\comp,N) \hspace{-0.1cm} = \hspace{-0.1cm} \phn (1-\pfcccs)(\ttot-\comp T_{s})\mathcal{C} \nonumber \\
  &~~~~~~~~~~~~~~~~~~ + \kappa_{c}\mathcal{C}(\ttot-\comp T_{s}) \pho (1-\pdcccs) \nonumber\\ 
  &~~~~~~~~~~~~~~~~~~ - \phi \mathcal{C}(\ttot-\comp T_{s}) \phn \pfcccs. \label{RCCCSEqn}
\end{align}
Similarly, the average energy consumption of the CCCS scheme, as illustrated in Tab.~\ref{TputEconTable}, can be written as 
\begin{align} 
& \hspace{-0.15cm}\ecccs(\lambda,\comp,N) \hspace{-0.1cm} = \hspace{-0.1cm} \left(NP_{s}\mathtt{c}\tau_{s}+N P_{s}\comp \tau_{r}\right) \nonumber \\  &~~~+P_{t}(\ttot-\comp T_{s})\left(1-\pho \pdcccs- \phn \pfcccs\right).
\end{align}
Based on above, the energy efficiency, measured in (bits/Hz/J), of the underlying CR network is given by
\begin{equation}  \label{EEeqn}
\eecccs(\lambda, \comp, N) \triangleq \frac{ \rcccs(\lambda, \comp, N)}{\ecccs(\lambda, \comp, N)}.
\end{equation}

Recall that our goal here is to design $\lambda$ and $\comp$, for a given $N$, such that the energy efficiency $\eecccs(\lambda, \comp, N)$ is maximized, subject to constraints on the sensing errors. The optimization problems can be divided into two sub-categories, namely, optimizing $N$ for a given $\comp$, and optimizing $\comp$ for a given $N$. For a given $\comp$, the governing optimization problem is:
\begin{eqnarray}                
& \mathcal{OP}^{(N)}_{\scriptscriptstyle CCCS}:  \underset{N}{\max} ~~~ \eecccs(\comp, N) \nonumber \\
& ~~~~~~~~~~  s.t. ~~ \pfcccs \leq \overline{P}_{f}, \nonumber \\
& ~~~~~~~~~~~~~~~~ \pdcccs \geq \overline{P}_{d}, \label{OPwithN}
\end{eqnarray}
and for a given $N$, the governing optimization problem is given as
\begin{eqnarray}                
& \mathcal{OP}^{(\comp)}_{\scriptscriptstyle CCCS}:  \underset{\comp}{\max} ~~~ \eecccs(\comp, N) \nonumber \\
& ~~~~~~~~~~  s.t. ~~ \pfcccs \leq \overline{P}_{f}, \nonumber \\
& ~~~~~~~~~~~~~~~~ \pdcccs \geq \overline{P}_{d}. \label{OPwithc}
\end{eqnarray}
In the subsequent analysis, we assume that $0 \leq \overline{P}_{f} < \overline{P}_{d} \leq 1$. This is followed from the IEEE 802.22 standard \cite{Carl_IEEE_2009} requirements, where the lower bound on the probability of signal detection and upper bound on the probability of false-alarm are $0.9$ and $0.1$, respectively.

The problems given in \eqref{OPwithN} and \eqref{OPwithc} are hard to solve, because the expression for $\eecccs(\comp,N)$ calculated from \eqref{EEeqn} is lengthy. For the ease of analysis, we approximate the cost function in the above problems, and mention the conditions under which the problem can be reduced to a convex optimization problem. Later, in Sec.~\ref{SecResults}, we demonstrate that the corresponding error due to these approximations is negligible.

\subsection{Approximation, Reformulation and Analysis} \label{SecCCCSEEApprox}
In this section, we first provide an approximation of $\eecccs$ and reformulate the optimization problems \eqref{OPwithN} and \eqref{OPwithc}. On a general note, the apriori probability of channel availability should be large enough to maintain the detection accuracy. That is, we assume that $\phn (1-\pfcccs) > \pho (1-\pdcccs)$, which is justified in a typical CR scenario \cite{Peh_IEEE_2011},\cite{Gao_IEEE_2013}. Following this, the average throughput in \eqref{RCCCSEqn} can be approximated by the above inequalities and setting $\kappa_{c}=0$ as
\begin{eqnarray} \label{ApprxReqn}
\trcccs(\lambda, \comp, N) \approx \phn \mathcal{C} (\ttot-\comp T_{s})\left(1-(1+\phi)\pfcccs\right).
\end{eqnarray} 
Similarly $\ecccs(\lambda,\comp,N)$ can be approximated as
\begin{align} 
& \hspace{-0.15cm} \tecccs(\lambda, \comp, N) \hspace{-0.1cm} \approx \hspace{-0.1cm} \left(NP_{s}\comp\tau_{s}+N P_{s}\comp \tau_{r}\right) \nonumber \\
&~~~~~~~~~~ + P_{t}(\ttot-\comp T_{s}) \phn (1- \pfcccs). \label{ECCCSEqn}
\end{align}

Consequently, $\eecccs(\lambda,\comp,N)$ can be approximated as
\begin{eqnarray} 
\teecccs(\lambda,\comp,N) = \frac{\trcccs(\lambda,\comp,N)}{\tecccs(\lambda,\comp,N)}, \label{EECCCSapproxEqn}
\end{eqnarray} 
and the optimization problems $\mathcal{OP}^{(N)}_{\scriptscriptstyle CCCS}$ and $\mathcal{OP}^{(\comp)}_{\scriptscriptstyle CCCS}$ can be respectively reformulated as 
\begin{eqnarray}               
& \mathcal{OP}1^{(N)}_{\scriptscriptstyle CCCS}:  \underset{\lambda, N}{\max} ~~~ \teecccs(\lambda, \comp, N) =\frac{\trcccs(\lambda, \comp, N)}{\tecccs(\lambda, \comp, N)} \nonumber \\
& \hspace{-0.7cm}  s.t. ~~~~ \pfcccs \leq \overline{P}_{f}, \nonumber \\
& \hspace{-0.7cm} ~~~~~~~~ \pdcccs \geq \overline{P}_{d}, \label{OP2withN}
\end{eqnarray}
and
\begin{eqnarray}               
& \mathcal{OP}1^{(\comp)}_{\scriptscriptstyle CCCS}:  \underset{\lambda, \comp}{\max} ~~~ \teecccs(\lambda, \comp, N) =\frac{\trcccs(\lambda, \comp, N)}{\tecccs(\lambda, \comp, N)} \nonumber \\
& \hspace{-0.7cm}  s.t. ~~~~ \pfcccs \leq \overline{P}_{f}, \nonumber \\
& \hspace{-0.7cm} ~~~~~~~~ \pdcccs \geq \overline{P}_{d}. \label{OP2withc}
\end{eqnarray} 
Later, in Sec.~\ref{SecResults}, we show that the errors due to these approximations are negligible.

Note that $\pdcccs$ and $\pfcccs$ are dependent on $\comp$ and $N$, only through their product $\comp N$. The following theorem provides the solution to the optimal threshold, $\lambda^*$, for the optimization problems in \eqref{OP2withN} and \eqref{OP2withc}.

\begin{theorem} 
\label{OptLambdaRan}
The optimal threshold $\lambda^*$ for the optimization problem $\mathcal{OP}1^{(c)}_{\scriptscriptstyle CCCS}$ satisfies the constraint $\pdcccs \geq \overline{P}_{d}$ with equality, and is given by
\begin{align}
& \lambda^* = \varn (1+\gamma) \left\{\sqrt{2 \comp N P} Q^{-1}(\overline{P}_d) + \comp N P\right\}.
\end{align}
\end{theorem}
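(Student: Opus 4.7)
My plan is to show that the objective $\teecccs(\lambda,\comp,N)$ is monotonically increasing in $\lambda$ on the feasible set, so that the optimum is pushed against the upper bound on $\lambda$ induced by the $\pdcccs \geq \overline{P}_d$ constraint, and then to invert the expression \eqref{pdeqn} to solve for $\lambda^*$.

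First, for fixed $\comp$ and $N$, note from \eqref{ApprxReqn} and \eqref{ECCCSEqn} that both $\trcccs$ and $\tecccs$ depend on $\lambda$ only through $u \triangleq \pfcccs$. Writing $K \triangleq \phn\mathcal{C}(\ttot-\comp T_s)$, $A \triangleq NP_s\comp\tau_s + NP_s\comp\tau_r$, and $B \triangleq P_t(\ttot-\comp T_s)\phn$, we have
\begin{equation}
\teecccs = \frac{K\bigl(1-(1+\phi)u\bigr)}{A + B(1-u)},
\end{equation}
which is a ratio of two affine functions of $u$. The plan is to differentiate this with respect to $u$; after simplification the numerator of the derivative collapses to a constant $-K\bigl[(1+\phi)A + \phi B\bigr]$, which is strictly negative (all constants are positive, and $\phi\in[0,1)$). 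Hence $\teecccs$ is strictly decreasing in $u = \pfcccs$.

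Next, from \eqref{pfeqn} and \eqref{pdeqn}, both $\pfcccs$ and $\pdcccs$ are $Q$-functions of affine functions of $\lambda$ with positive slope, hence both are strictly decreasing in $\lambda$. Composing, $\teecccs$ is strictly increasing in $\lambda$ over the feasible region, so the maximum in $\mathcal{OP}1^{(\comp)}_{\scriptscriptstyle CCCS}$ is attained by taking $\lambda$ as large as possible subject to the two constraints. Since the constraint $\pfcccs \leq \overline{P}_f$ is a \emph{lower} bound on $\lambda$ and the constraint $\pdcccs \geq \overline{P}_d$ is an \emph{upper} bound (with $\overline{P}_f < \overline{P}_d$ ensuring the feasible set is nonempty and the binding constraint is the latter), the optimum satisfies $\pdcccs = \overline{P}_d$ with equality.

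Finally I would solve $Q\bigl((\lambda/(P(\vars+\varn)) - \comp N)/\sqrt{2\comp N P}\bigr) = \overline{P}_d$ for $\lambda$: apply $Q^{-1}$ to both sides, multiply through by $\sqrt{2\comp N P}$, add $\comp N$, and multiply by $P(\vars+\varn) = P\varn(1+\gamma)$, which yields the stated closed form. The only substantive step is the monotonicity argument; once the sign of the derivative of $\teecccs$ in $u$ is pinned down the rest is bookkeeping, so I expect the main (very mild) obstacle to be simplifying that derivative cleanly enough to identify its sign without getting tangled in the throughput/energy constants.
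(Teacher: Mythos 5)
Your proposal is correct and follows the same overall route as the paper: show $\teecccs$ is increasing in $\lambda$, conclude the detection constraint binds, then invert the $Q$-function. The difference is in how the monotonicity is established, and here your argument is actually stronger than the paper's. The paper writes $\partial\teecccs/\partial\lambda = -\frac{\partial P_f}{\partial\lambda}V_1(\lambda,\comp,N)$ with $V_1 \propto (1+\phi)\phn\mathcal{C}(\ttot-\comp T_s)\tecccs - \phn P_t(\ttot-\comp T_s)\trcccs$, states that $V_1\geq 0$ is ``hard to show analytically,'' and falls back on choosing parameters and verifying the inequality numerically in the results section. Your observation that $\trcccs$ and $\tecccs$ are affine in $u=\pfcccs$, so that $\teecccs = K\bigl(1-(1+\phi)u\bigr)/\bigl(A+B(1-u)\bigr)$ has derivative with numerator $-K\bigl[(1+\phi)A+\phi B\bigr]<0$, is exactly the cancellation the paper misses: it proves $V_1\geq 0$ (indeed $>0$) unconditionally for all positive parameter values, removing the numerical caveat. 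The rest (both $\pfcccs$ and $\pdcccs$ decreasing in $\lambda$, hence the false-alarm constraint is a lower bound and the detection constraint an upper bound on $\lambda$, so the latter is active) matches the paper's logic.

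One bookkeeping caution: inverting \eqref{pdeqn} \emph{as printed} does not give the stated closed form --- multiplying through by $P(\vars+\varn)$ yields $(\vars+\varn)\bigl\{P\sqrt{2\comp NP}\,Q^{-1}(\overline{P}_d)+\comp NP\bigr\}$, which carries an extra factor $P$ on the first term. The discrepancy is a typo in the main text: \eqref{pfeqn}--\eqref{pdeqn} are inconsistent with the asymptotic distribution \eqref{teststatdet}, whereas the appendix uses the consistent form $\overline{P}_d = Q\bigl((\lambda^*/(\vars+\varn)-\comp NP)/\sqrt{2\comp NP}\bigr)$, whose inversion gives the theorem's $\lambda^*=\varn(1+\gamma)\{\sqrt{2\comp NP}\,Q^{-1}(\overline{P}_d)+\comp NP\}$. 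Your final step should start from that expression (or note the correction) rather than claiming the literal \eqref{pdeqn} produces the stated formula; otherwise the argument is complete.
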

\begin{proof}
See Appendix \ref{OptLambdaRanNPThmProof}.
\end{proof}

As a consequence of the above theorem, we now show that the other constraint in  \eqref{OP2withc}, namely $\pfcccs \leq \overline{P}_f$, reduces to an upper bound on the product $\comp N$. By substituting $\lambda = \lambda^*$ in the constraint $\pfcccs \leq \overline{P}_f$, we get
\begin{align}
& \overline{P}_f \geq Q\left(\frac{\frac{\varn (1+\gamma) \left\{\sqrt{2 \comp N P} Q^{-1}(\overline{P}_d) + \comp N P\right\}}{\varn}-\comp N P}{\sqrt{2 \comp N P}}\right).
\end{align}
Rearranging the above equation, this condition reduces to
\begin{align}
\comp N \leq \frac{2}{\gamma^2 P}\left\{Q^{-1}(\overline{P}_f) - (1+\gamma) Q^{-1}(\overline{P}_d)\right\}^2.
\end{align}

Now, the optimization problem $\mathcal{OP}1^{(\comp)}_{\scriptscriptstyle CCCS}$ given in  \eqref{OP2withc} can be reformulated as
\begin{align}
& \mathcal{OP}2^{(\comp)}_{\scriptscriptstyle CCCS}:  \underset{\comp}{\max} ~~~ \teecccs(\lambda^*, \comp, N) \nonumber \\
& ~~ s.t. ~ \comp \leq \compbnd \triangleq \frac{2 \left\{Q^{-1}(\overline{P}_f) - (1+\gamma) Q^{-1}(\overline{P}_d)\right\}^2}{\gamma^2 N P}.  \label{OP3withc}
\end{align}

In the next theorem, we consider  \eqref{OP3withc} in particular, and show that the corresponding objective function is monotonically increasing (and concave) for $\comp \in (0,\compbnd)$, for a given $N$. Therefore, the optimal $\comp^*$ which maximizes $\teecccs(\lambda^*, \comp, N)$ for a given $N$ is given as $\comp^* = \compbnd$.

\begin{theorem} \label{cthmranNP}
For a given $N$, the objective function in the optimization problem $\mathcal{OP}2^{(\comp)}_{\scriptscriptstyle CCCS}$ is monotonically increasing in $\comp \in (0,\compbnd)$. Therefore, $\comp^* = \compbnd$.
\end{theorem}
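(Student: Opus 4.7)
The plan is to fix $\lambda=\lambda^{*}$ from Theorem~\ref{OptLambdaRan}, which pins $\pdcccs\equiv\overline{P}_{d}$ and reduces $\pfcccs$ to the single-variable function $p(\comp)\triangleq\pfcccs(\lambda^{*},\comp,N)=Q\!\bigl((1+\gamma)Q^{-1}(\overline{P}_{d})+\gamma\sqrt{\comp NP/2}\bigr)$, which is strictly decreasing with $p'(\comp)<0$ available in closed form by the chain rule. Setting $g(\comp)\triangleq\ttot-\comp T_{s}$ and $E_{0}\triangleq NP_{s}(\tau_{s}+\tau_{r})$, the approximated numerator and denominator become
\begin{align*}
A(\comp) &\triangleq\trcccs(\lambda^{*},\comp,N)=\phn\mathcal{C}\,g(\comp)\bigl(1-(1+\phi)p(\comp)\bigr),\\
B(\comp) &\triangleq\tecccs(\lambda^{*},\comp,N)=E_{0}\comp+P_{t}\phn\,g(\comp)\bigl(1-p(\comp)\bigr),
\end{align*}
so $(A/B)'(\comp)$ has the same sign as $A'(\comp)B(\comp)-A(\comp)B'(\comp)$, and it suffices to prove this combination is positive on $(0,\compbnd]$.

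I would expand $A'B-AB'$ via the product rule with $g'=-T_{s}$. Two clean observations make the bookkeeping manageable: (i) the cross terms proportional to $T_{s}P_{t}\phn\,g(\comp)(1-p)\bigl(1-(1+\phi)p\bigr)$ appearing in $A'B$ and $AB'$ carry opposite signs and cancel exactly; (ii) the pair of $g(\comp)^{2}p'(\comp)$ terms combines, via the identity $\bigl(1-(1+\phi)p\bigr)-(1+\phi)(1-p)=-\phi$, into a single non-negative contribution $-\phi P_{t}\phn\,g(\comp)^{2}p'(\comp)$. The surviving $E_{0}$-proportional pieces collapse through $T_{s}\comp+g(\comp)=\ttot$, yielding the compact expression
\[
A'B-AB'=\phn\mathcal{C}\Bigl\{g(\comp)|p'(\comp)|\bigl[(1+\phi)E_{0}\comp+\phi P_{t}\phn\,g(\comp)\bigr]-E_{0}\ttot\bigl(1-(1+\phi)p(\comp)\bigr)\Bigr\},
\]
in which only the last, ``sensing-time loss'', piece is negative.

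The pivotal step, and where I expect the main difficulty, is establishing that the detection-gain bracket dominates the sensing-time loss uniformly on the feasible interval:
\[
g(\comp)|p'(\comp)|\bigl[(1+\phi)E_{0}\comp+\phi P_{t}\phn\,g(\comp)\bigr]>E_{0}\ttot\bigl(1-(1+\phi)p(\comp)\bigr),\qquad\comp\in(0,\compbnd].
\]
The explicit form $|p'(\comp)|=\varphi\!\bigl((1+\gamma)Q^{-1}(\overline{P}_{d})+\gamma\sqrt{\comp NP/2}\bigr)\cdot\tfrac{\gamma}{2}\sqrt{NP/(2\comp)}$, with $\varphi$ the standard normal PDF, carries a $1/\sqrt{\comp}$ singularity at the origin that keeps $\comp\,g(\comp)|p'(\comp)|$ bounded below on the feasible range. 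I would couple this with the feasibility bound $p(\comp)\leq\overline{P}_{f}$, which tightens $1-(1+\phi)p$ from above, to reduce the required inequality to a parameter condition on $(\ttot,T_{s},P_{s},P_{t})$ consistent with the standing assumptions. Once secured, $\teecccs(\lambda^{*},\cdot,N)$ is strictly increasing on $(0,\compbnd]$, and the constrained maximum is therefore attained at the right endpoint, giving $\comp^{*}=\compbnd$.
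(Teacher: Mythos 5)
Your setup and algebra are sound, and in fact cleaner than the paper's: fixing $\lambda=\lambda^{*}$, your cancellations (i)--(ii) and the identity $T_s\comp+g(\comp)=\ttot$ are correct, and the compact expression you obtain for $A'B-AB'$ checks out. The paper proceeds differently: it never derives this exact form, but instead lower-bounds $P_f$ via a Chernoff-type bound on the $Q$-function, splits the resulting lower bound on $\partial\teecccs/\partial\comp$ into four terms $X_1,\dots,X_4$, discards the nonnegative ones, and reduces positivity of $X_1$ to the explicit sufficient condition $\comp\le\compupbnd\triangleq\ttot W/(T_sW+Y)$; the remaining claim $\compbnd\le\compupbnd$ is then verified numerically for moderate $N$ and low SNR. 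So the theorem is, even in the paper, only established conditionally on a parameter check.

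The genuine gap in your proposal is that you stop exactly at the step that carries the content of the theorem. The dominance inequality $g|p'|\bigl[(1+\phi)E_0\comp+\phi P_t\phn g\bigr]>E_0\ttot\bigl(1-(1+\phi)p\bigr)$ is announced as ``the pivotal step'' and then deferred (``Once secured\dots''): no sufficient condition is derived, and none is verified for the paper's operating parameters. Moreover, it cannot hold unconditionally, and one of your supporting claims is false: since $|p'(\comp)|\sim \mathrm{const}\cdot\comp^{-1/2}$ near the origin, $\comp\,g(\comp)|p'(\comp)|\sim\sqrt{\comp}\to 0$, so this product is \emph{not} bounded away from zero on the feasible range. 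With $\phi=0$ (which the paper explicitly allows in scenario \textbf{S2}), the entire positive bracket then vanishes as $\comp\to 0$ while the loss term tends to $E_0\ttot\bigl(1-p(0^{+})\bigr)>0$, so the derivative is negative near $\comp=0$ and the claimed monotonicity on all of $(0,\compbnd)$ fails; for $\phi>0$ it is the divergent term $\phi P_t\phn g^{2}|p'|$, not $\comp g|p'|$, that rescues the inequality near the origin. To close the argument you would need to do what the paper does in its own way: extract from your (sharper) expression an explicit threshold on $\comp$ (or a condition on $\ttot,T_s,P_s,P_t,\phi$) under which the bracket dominates, and then show -- analytically or at least numerically for the stated parameter regime -- that it covers the whole interval $(0,\compbnd]$, so that the maximizer is indeed $\comp^{*}=\compbnd$.
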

\begin{proof}
See Appendix \ref{cthmranNPProof}.
\end{proof}

A similar argument can be made for the problem in \eqref{OP2withN}, using the following theorem.
	
\begin{theorem} \label{NthmranNP}
For a given $\comp$, the objective function in the optimization problem $\mathcal{OP}2^{(N)}_{\scriptscriptstyle CCCS}$ is monotonically increasing in $N \in (0,\nbnd)$. Therefore, $N^* = \nbnd$.
\end{theorem}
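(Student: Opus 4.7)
The plan is to mirror the proof of Theorem~\ref{cthmranNP}, exploiting the observation that $\pfcccs$ and $\pdcccs$ depend on $\comp$ and $N$ only through the product $u \triangleq \comp N$. I would first invoke Theorem~\ref{OptLambdaRan}: substituting $\lambda^{*}$ makes the detection constraint tight ($\pdcccs = \overline{P}_d$), and the false-alarm constraint $\pfcccs \leq \overline{P}_f$ collapses to the same upper bound on $u$ that appeared in \eqref{OP3withc}. For a fixed $\comp$, this bound translates to $N \leq \nbnd \triangleq \tfrac{2\{Q^{-1}(\overline{P}_f) - (1+\gamma)Q^{-1}(\overline{P}_d)\}^{2}}{\gamma^{2}\comp P}$, reducing the task to maximizing $\teecccs(\lambda^{*},\comp,N)$ over $N \in (0,\nbnd]$.

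Next, I would observe that for a fixed $\comp$, every $N$-dependent piece of $\teecccs(\lambda^{*},\comp,N)$ can be re-expressed in terms of $u = \comp N$ alone: the sensing/reporting energy $N\comp(P_s\tau_s + P_t\tau_r)$ equals $u(P_s\tau_s + P_t\tau_r)$, the transmission window $\ttot - \comp T_s$ equals $\ttot - \comp\tau_s - u\tau_r$, and $\pfcccs(\lambda^{*})$ is a function of $u$ only. Hence $\teecccs$ itself is a function of $u$ alone, and because $\comp > 0$, monotonicity in $N$ is equivalent to monotonicity in $u$. I would then parallel the derivative step from Theorem~\ref{cthmranNP}: apply the quotient rule to $\teecccs = \trcccs/\tecccs$, group the terms involving $d\pfcccs/du$ (which is a negative multiple of a Gaussian density) separately from those coming from the affine window factor, and check that $d\teecccs/du > 0$ throughout the feasible interval $(0,\comp\nbnd)$.

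The main obstacle is that the coefficient of $u$ inside the window $\ttot - \comp\tau_s - u\tau_r$ (with $\comp$ fixed) is different from the coefficient obtained in Theorem~\ref{cthmranNP} (with $N$ fixed, where the window was $\ttot - u\tau_s/N - u\tau_r$), so the resulting derivative is not literally the same expression and the sign analysis must be redone term by term. I would handle this by isolating the positive contribution driven by $-d\pfcccs/du > 0$, which simultaneously lifts both throughput-weighting factors $1-(1+\phi)\pfcccs$ and $1-\pfcccs$, and weighing it against the loss from the shrinking transmission window and the linearly growing sensing/reporting energy. Since, on the feasible interval, the argument of $Q(\cdot)$ inside $\pfcccs(\lambda^{*})$ stays at or below $Q^{-1}(\overline{P}_f)$, the underlying Gaussian density is bounded below by a strictly positive constant; this lower bound, combined with routine algebraic simplification of the quotient-rule expression, delivers positivity of $d\teecccs/dN$ on $(0,\nbnd)$ and hence the claimed monotonicity, giving $N^{*} = \nbnd$.
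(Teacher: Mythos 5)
Your overall route is the intended one: the paper itself proves this result by saying it follows ``in similar lines'' to Theorem~\ref{cthmranNP}, and your reduction via Theorem~\ref{OptLambdaRan}, the collapse of the false-alarm constraint to a bound on $u=\comp N$, and the observation that for fixed $\comp$ the objective depends on $N$ only through $u$ (window $\ttot-\comp\tau_s-u\tau_r$, sensing/reporting energy linear in $u$, $\pfcccs(\lambda^*)$ a function of $u$) is exactly the right setup, including your correct remark that the window's coefficient structure differs from the fixed-$N$ case so the sign analysis must be redone.

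The gap is in the last step. First, a decrease in $\pfcccs$ does \emph{not} ``simultaneously lift'' the objective through both $1-(1+\phi)\pfcccs$ and $1-\pfcccs$: the second factor sits in the energy denominator $\tecccs$, so raising it \emph{hurts} energy efficiency. The net coefficient of $-\partial\pfcccs/\partial u$ in the quotient-rule numerator is $\phn(\ttot-\comp T_s)\bigl[(1+\phi)\mathcal{C}\,\tecccs-P_t\,\trcccs\bigr]$, whose positivity is a parameter condition (the same one invoked in the proof of Theorem~\ref{OptLambdaRan}), not a free fact. Second, and more importantly, a strictly positive lower bound on the Gaussian density cannot by itself deliver positivity of $d\teecccs/dN$ on all of $(0,\nbnd)$: the negative contributions --- the shrinking window at rate $\comp\tau_r$ and, especially, the sensing/reporting energy growing linearly in $N$ --- have magnitudes set by $P_s,P_t,\tau_s,\tau_r,\mathcal{C},\ttot$, and for unfavorable parameters (e.g., large per-node sensing cost) the derivative can turn negative before $\nbnd$ is reached. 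This is precisely why the paper's Appendix~\ref{cthmranNPProof} does not prove unconditional positivity either: it lower-bounds the derivative, splits it into $X_1+X_2+X_3+X_4$, shows $X_2,X_3,X_4\ge 0$, and finds that $X_1\ge 0$ only for $\comp\le\compupbnd$, with the final inclusion $\compbnd\le\compupbnd$ verified numerically for practically relevant parameters. A faithful ``similar lines'' proof of Theorem~\ref{NthmranNP} must reproduce this structure in the variable $N$: derive the analogous threshold $N_{UB}$ from the $X_1$-type term (now with the window differentiated at rate $\comp\tau_r$ and the energy term at rate $\comp(P_s\tau_s+P_t\tau_r)$) and then argue, at least numerically or under stated parameter restrictions, that $\nbnd\le N_{UB}$. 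As written, your claim of positivity by ``routine algebraic simplification'' skips the very comparison of parameter-dependent terms on which the result actually rests.
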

\begin{proof}
The proof is in similar lines to that of Theorem \ref{cthmranNP}, and is omitted for brevity.
\end{proof}

To find the optimal operating point -- either $(N^{*},\comp)$, or $(N,\comp^{*})$ -- that maximizes the energy efficiency based on the above analytic development, we propose the following simple search algorithm. Summarized as Algorithm~\ref{OptCCCSAlgooptimum}, this technique can be used to solve either of the optimization problems $\mathcal{OP}2^{(\comp)}_{\scriptscriptstyle CCCS}$ or $\mathcal{OP}2^{(N)}_{\scriptscriptstyle CCCS}$. This completes our analysis on finding the optimal $\comp^*$ for a given $N$, or to find the optimal $N^*$ for a given $\comp$, such that the energy efficiency of the CRN is maximized. In the next section, we consider a similar performance analysis of the CRN with a deterministic PU signal.

\section{Performance with Deterministic PU Signal} \label{EEforDetersignal}

In this section, we consider the EE performance of the CR network for the case when PU signal is deterministic. Although unrealistic in practice, performance study of a CRN with a deterministic PU signal has been studied earlier in the context of capacity analysis \cite{Urkowitz_IEEE_1967}, spectrum sensing \cite{Reisi_IEEE_2012}, etc., which serves as an upper bound on the performance of a system employed in practice. In the case of a deterministic PU signal, asymptotic distribution of the test statistic at the FC under either hypotheses can be written as \cite{varshney_IEEE_2017}
\begin{eqnarray} \label{teststati}
T(\xmat) \triangleq \sim \left\lbrace \begin{array}{cl}\mathcal {N}(0, \varn N\Vert \phat \xvec\Vert _2^2), & \hbox{under}\ \mathcal{H}_{0} \\ \mathcal {N}(N\Vert \phat \xvec \Vert _2^2, \varn N\Vert \phat \xvec \Vert _2^2) & \hbox{under}\ \mathcal{H}_{1} \end{array}\right.,
\end{eqnarray}
where  $\Vert \phat \xvec \Vert _2^2 \triangleq \xvec^T \meas ^T(\meas \meas^T)^{-1}\meas \xvec$. From \eqref{teststati}, the probabilities of false-alarm and signal detection at the FC following the CCCS scheme with deterministic PU signal are given by
\begin{align}
& \pfcccsdet = P(T(\xmat) > \lambda |\mathcal{H}_{0}) \nonumber \\
& \phantom{\pfcccsdet} = Q\left(\frac{\lambda-N \comp \gamma \varn}{\sqrt{\varn N\Vert \phat \xvec \Vert _2^2)}}\right) \\
& \pdcccsdet=P(T(\xmat)>\lambda|\mathcal{H}_{1}) \nonumber \\
& \phantom{\pdcccsdet} = Q\left(\frac{\lambda-N \comp \gamma \varn}{\sqrt{\varn N\Vert \phat \xvec \Vert _2^2)}}\right)
\end{align}
As discussed in the random signal case, using the concept of $\epsilon$-stable embedding, for larger value of $NM$ the approximation $\Vert \phat \xvec \Vert_2^2 \approx \frac{M}{P}\Vert \xvec \Vert _2^2 = \comp \Vert \xvec \Vert _2^2$ \cite{varshney_IEEE_2017}. Therefore,
\begin{align} 
& \pfcccsdet \hspace{-0.1cm} = \hspace{-0.1cm} Q\left( \hspace{-0.1cm} \frac{\lambda}{\varn \sqrt{N \comp \gamma}} \hspace{-0.05cm} \right), \\
& \pdcccsdet \hspace{-0.1cm} = \hspace{-0.1cm} Q\left( \hspace{-0.1cm} \frac{\lambda-N \comp \gamma\varn}{\sigma_{w}^2\sqrt{N \comp \gamma}} \hspace{-0.1cm} \right).
\end{align}
It is easy to show that the detection threshold $\lambda = \frac{N}{2} \xvec^T \meas^T(\meas \meas^T)^{-1} \meas \xvec = \frac{N}{2}\Vert \phat \xvec \Vert _2^2=\frac{N}{2} \comp \gamma \sigma_{n}^2$. Therefore, the final expressions for $\pfcccsdet$ and $\pdcccsdet$ are given by
\begin{align} 
\label{pdpfequdet}
& \pfcccsdet =  Q\left(\frac{\sqrt{\comp N \gamma}}{2}\right), \\
& \pdcccsdet = Q\left(-\frac{\sqrt{\comp N \gamma}}{2}\right) 
\end{align} 
Note that the expressions for average achievable throughput, average energy consumption and the energy efficiency expressions across all four scenarios $\textbf{S1}-\textbf{S4}$ for the deterministic case remains similar to the random case, except that $\pfcccs$ and $\pdcccs$ are replaced by $\pfcccsdet$ and $\pdcccsdet$, respectively. The approximations discussed in the previous case also hold for the deterministic case. For a given $\comp$, the corresponding optimization problem for the deterministic case can be written as
\begin{eqnarray}               
& \mathcal{OP}1^{(N)}_{\scriptscriptstyle CCCS, det}:  \underset{N}{\max} ~~~ \teecccsdet(\comp, N) =\frac{\trcccsdet(\comp, N)}{\tecccsdet(\comp,N)} \nonumber \\
& \hspace{-0.4cm}  s.t. ~~ \pfcccsdet \leq \overline{P_{f}}, \nonumber \\
& \hspace{-0.4cm} ~~~~~~ \pdcccsdet \geq \overline{P_{d}}, \label{OP2withN_det}
\end{eqnarray}
and the optimization problem for given $N$ is given by
\begin{eqnarray}               
& \mathcal{OP}1^{(\comp)}_{\scriptscriptstyle CCCS, det}:  \underset{\comp}{\max} ~~~ \teecccsdet(\comp, N) =\frac{\trcccsdet(\comp, N)}{\tecccsdet(\comp,N)} \nonumber \\
& \hspace{-0.4cm}  s.t. ~~ \pfcccsdet \leq \overline{P_{f}}, \nonumber \\
& \hspace{-0.4cm} ~~~~~~ \pdcccsdet \geq \overline{P_{d}}, \label{OP2withc_det}
\end{eqnarray} 
for some $0 < \pfcccsdet < \pdcccsdet < 1$. We later show that the errors due to these approximations are negligible. Again, note that both $\pfcccsdet$ and $\pdcccsdet$ depend on $\comp$ and $N$ through the product $\comp N$. 

\begin{theorem} \label{OptLambdaDetNPThm}
The optimal threshold $\lambda^*$ for the optimization problems $ \mathcal{OP}1^{(N)}_{\scriptscriptstyle CCCS, det}$ and $ \mathcal{OP}1^{(\comp)}_{\scriptscriptstyle CCCS, det}$ satisfies the constraint $\pdcccs \geq \overline{P}_{d}$ with equality, and is given by
\begin{align}
& \lambda^* = \varn \sqrt{N\comp \gamma} \left\{Q^{-1}(\overline{P}_d) + \sqrt{N\comp \gamma}\right\}.
\end{align}
\end{theorem}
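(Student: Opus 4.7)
The plan is to mirror the blueprint of Theorem~\ref{OptLambdaRan}, since in the deterministic case the optimization variable $\lambda$ enters the objective $\teecccsdet(\lambda,\comp,N)$ only through $\pfcccsdet$, and the algebraic structure of $\trcccsdet$ and $\tecccsdet$ as functions of $\pfcccsdet$ is identical to that of \eqref{ApprxReqn}--\eqref{ECCCSEqn} (only $\pfcccs$ is replaced by $\pfcccsdet$). Consequently the argument reduces to (i) monotonicity of $\teecccsdet$ in $\lambda$, (ii) identifying the binding constraint, and (iii) inverting it to obtain the closed-form expression.

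First, I would establish that $\teecccsdet$ is monotonically non-decreasing in $\lambda$ for fixed $(\comp,N)$. Setting $q := \pfcccsdet$ and collecting the $\lambda$-independent quantities $A := \phn \mathcal{C}(\ttot - \comp T_s)$, $B := NP_s \comp(\tau_s + \tau_r)$, and $C := P_t(\ttot - \comp T_s)\phn$, the approximations give $\trcccsdet = A\bigl(1-(1+\phi)q\bigr)$ and $\tecccsdet = B + C(1-q)$. A direct application of the quotient rule yields
\begin{equation*}
\frac{\partial}{\partial q}\!\left(\frac{\trcccsdet}{\tecccsdet}\right) \;=\; -\,\frac{A\bigl[(1+\phi)B + \phi C\bigr]}{\bigl(B + C(1-q)\bigr)^2} \;\leq\; 0,
\end{equation*}
so $\teecccsdet$ is non-increasing in $\pfcccsdet$. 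Since $\pfcccsdet = Q\!\bigl(\lambda/(\varn\sqrt{N\comp\gamma})\bigr)$ is strictly decreasing in $\lambda$, $\teecccsdet$ is non-decreasing in $\lambda$.

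Next, I would use this monotonicity to locate $\lambda^*$. The constraint $\pfcccsdet \leq \overline{P}_f$ translates to a lower bound on $\lambda$, while $\pdcccsdet \geq \overline{P}_d$ translates, via the monotonicity of $Q(\cdot)$ and the expression $\pdcccsdet = Q\!\bigl((\lambda - N\comp\gamma\varn)/(\varn\sqrt{N\comp\gamma})\bigr)$, to an upper bound on $\lambda$. Because the objective is non-decreasing in $\lambda$, the optimum is attained at this upper limit, i.e., with $\pdcccsdet = \overline{P}_d$ holding with equality. Solving this equality for $\lambda$ gives
\begin{equation*}
\lambda^{*} \;=\; \varn\sqrt{N\comp\gamma}\,\Bigl\{Q^{-1}(\overline{P}_d) + \sqrt{N\comp\gamma}\Bigr\},
\end{equation*}
which matches the claimed formula.

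The main (and only nontrivial) obstacle is verifying the monotonicity in $\lambda$, but the quotient-rule computation reduces it to observing that the numerator of the derivative in $q$ is a strictly negative constant, independent of $q$, so no case analysis or regime splitting is needed. One minor feasibility check that should accompany the proof is to note that this $\lambda^*$ simultaneously satisfies $\pfcccsdet \leq \overline{P}_f$; this holds in the IEEE~802.22 regime used throughout the paper (where $\overline{P}_f$ and $\overline{P}_d$ are chosen consistently with $0 \le \overline{P}_f < \overline{P}_d \le 1$) and is the same consistency condition that was invoked in the random-signal case.
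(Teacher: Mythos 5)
Your proposal is correct, and its overall skeleton (show the approximated energy efficiency is non-decreasing in $\lambda$, conclude the detection constraint binds, then invert $\pdcccsdet = \overline{P}_d$ to get $\lambda^*$) is the same as the paper's, which likewise proves $\partial \pfcccsdet / \partial \lambda \leq 0$ and then invokes the Theorem~\ref{OptLambdaRan} argument verbatim for the deterministic case. Where you genuinely depart from the paper is in how the monotonicity is justified: the paper's argument (inherited from Appendix~\ref{OptLambdaRanNPThmProof}) states that it is hard to show $V_1(\lambda,\comp,N) \geq 0$ analytically and instead \emph{assumes} the parameter condition $(1+\phi)\phn \mathcal{C}(\ttot-\comp T_s)\tecccs \geq \phn P_t(\ttot-\comp T_s)\trcccs$, verified only numerically in Sec.~\ref{SecResults}. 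Your reparametrization in $q=\pfcccsdet$ shows this caveat is unnecessary: writing $\trcccsdet = A\bigl(1-(1+\phi)q\bigr)$ and $\tecccsdet = B + C(1-q)$, the $q$-dependent terms in the quotient-rule numerator cancel exactly, leaving the constant $-A\bigl[(1+\phi)B+\phi C\bigr] \leq 0$, so the monotonicity holds unconditionally for all nonnegative parameter values. This is a strictly stronger (and cleaner) justification of the key step; the only things to keep in mind are that non-decreasingness guarantees an optimizer at the boundary (the same level of rigor as the paper's claim of equality), and that feasibility of $\lambda^*$ against $\pfcccsdet \leq \overline{P}_f$ is exactly the subsequent $\comp N$ bound the paper derives, which you correctly flag as the accompanying consistency check.
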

\begin{proof}
See Appendix \ref{OptLambdaDetNPThmProof}.
\end{proof}

\begin{algorithm} [t!]
	\caption{Algorithm for optimizing N and $\comp$} \label{OptCCCSAlgooptimum}
	\begin{algorithmic}[1]
		\State Set $P_{s}, P_{t}, \ttot, \tau_s,\tau_s, N, i_0,  \phn, \pho,  \overline{P_f}, \overline{P_d}$
		\State When $\pfcccs \leq\overline{P}_f$, Calculate $\comp N$ using (\ref{pfeqn})
		\State  When $\pdcccs \geq \overline{P}_d$, Calculate $\comp N$ using (\ref{pdeqn})
		\Procedure   \text{To find optimal $\comp$}
		\State Fix $N \in (1,N_{\max})$
		\State Compute ~~~~~~$\comp=\frac{2 \left\{Q^{-1}(\overline{P}_f) - (1+\gamma) Q^{-1}(\overline{P}_d)\right\}^2}{\gamma^2 N P}$ ~~~~~~~~~~~~~~~(\text {for random signal})
		\State Compute ~~~~~~$\comp= \frac{ \left\{Q^{-1}(\overline{P}_f) - Q^{-1}(\overline{P}_d)\right\}^2}{\gamma N }$ ~~~~~~~~~~~~~~~(\text {for deterministic signal})
		\State Compute $\teecccs(\comp, N)$ using \eqref{EECCCSapproxEqn}
		with $N$ and $\comp$
		\State Compute $\max(\teecccs(\comp, N))$ and respective $\comp^{*}$
		\State \textbf{return} ~ $\comp^{*}$
		
		\EndProcedure
		\Procedure   \text{To find optimal $N$}
		\State Fix $\comp \in (0,1)$
		\State Compute ~~~$N=\frac{2 \left\{Q^{-1}(\overline{P}_f) - (1+\gamma) Q^{-1}(\overline{P}_d)\right\}^2}{\gamma^2 \comp P}$ ~~~~~~~~~~~~~~~(\text {for random signal})
			\State Compute $N= \frac{ \left\{Q^{-1}(\overline{P}_f) - Q^{-1}(\overline{P}_d)\right\}^2}{\gamma \comp }$ ~~~~~~~~~~~~~~~~~~(\text {for deterministic signal})
		\State Compute $\teecccs(\comp, N)$ using \eqref{EECCCSapproxEqn}
		with $N$ and $\comp$
		\State Compute $\max(\teecccs(\comp, N))$ and respective $N^{*}$
		\State \textbf{return} ~ $N^{*}$
		\EndProcedure
		\State Return $\max(\eecccs)$ and the corresponding $N^{*},\comp^{*}$,
	\end{algorithmic}
\end{algorithm}

Similar to the case of random PU signal, following the above theorem, we now show that the other constraint in \eqref{OP2withN_det} and \eqref{OP2withc_det}, namely $\pfcccs \leq \overline{P}_f$, reduces to an upper bound on the product $\comp N$. By substituting $\lambda = \lambda^*$ in the constraint $\pfcccs \leq \overline{P}_f$, we get
\begin{align}
& \overline{P}_f \geq Q\left(\frac{\sqrt{ \comp N \gamma}\varn(Q^{-1}(\overline{P}_d)+\sqrt{N\comp\gamma})}{\varn\sqrt{ \comp N \gamma}}\right) 
\end{align}
Rearranging the above equation, this condition reduces to
\begin{align}
\comp N \leq \frac{1}{\gamma }\left\{Q^{-1}(\overline{P}_f) -  Q^{-1}(\overline{P}_d)\right\}^2.
\end{align}

Now, the optimization problems $\mathcal{OP}1^{(N)}_{\scriptscriptstyle CCCS, det}$ and $\mathcal{OP}1^{(\comp)}_{\scriptscriptstyle CCCS, det}$ given in \eqref{OP2withN_det} and \eqref{OP2withc_det} can be respectively reformulated as
\begin{align}
& \mathcal{OP}2^{(N)}_{\scriptscriptstyle CCCS,det}:  \underset{N}{\max} ~~~ \teecccs(\lambda^*, \comp, N) \nonumber \\
& ~~~~~  s.t. ~ N \leq \nbnd \triangleq \frac{ \left\{Q^{-1}(\overline{P}_f) - Q^{-1}(\overline{P}_d)\right\}^2}{\gamma \comp },  \label{OP3withNdet}
\end{align}
and
\begin{align}
& \mathcal{OP}2^{(\comp)}_{\scriptscriptstyle CCCS,det}:  \underset{\comp}{\max} ~~~ \teecccs(\lambda^*, \comp, N) \nonumber \\
& ~~~~~  s.t. ~ \comp \leq \compbnd \triangleq \frac{ \left\{Q^{-1}(\overline{P}_f) - Q^{-1}(\overline{P}_d)\right\}^2}{\gamma N },  \label{OP3withcdet}
\end{align}

In the next theorem, we consider the problem \eqref{OP3withcdet} in particular, and show that the corresponding objective function is monotonically increasing (and concave) for $\comp \in (0,\compbnd)$, for a given $N$.

\begin{theorem} \label{etathm}
For a given $N$, the objective function in the optimization problem $\mathcal{OP}1^{(\comp)}_{\scriptscriptstyle CCCS, det}$ is monotonically increasing in $\comp \in (0,\compbnd)$, and hence $\comp^* = \compbnd$.
\end{theorem}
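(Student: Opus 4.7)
The plan is to substitute the optimal threshold from Theorem \ref{OptLambdaDetNPThm}, reduce $\teecccsdet$ to a single-variable function of $\comp$ (for fixed $N$), and show its derivative is strictly positive on the feasible interval $(0,\compbnd)$; this parallels the proof of Theorem \ref{cthmranNP} in the random-signal case, with the Gaussian detection statistics replaced by the deterministic-signal expressions of Section \ref{EEforDetersignal}.

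Setting $\lambda = \lambda^*$ forces $\pdcccsdet = \overline{P}_d$, and the false-alarm probability collapses to the one-variable function
\[
p_f(\comp) \;=\; Q\bigl(Q^{-1}(\overline{P}_d) + \sqrt{N\comp\gamma}\bigr),
\]
which is strictly decreasing on $(0,\compbnd]$ with $p_f(0)=\overline{P}_d$ and $p_f(\compbnd)=\overline{P}_f$. Writing $u(\comp) := \ttot - \comp T_s$, the approximated throughput and energy reduce to
\begin{align*}
R(\comp) &= \phn \mathcal{C}\,u(\comp)\,\bigl(1 - (1+\phi) p_f(\comp)\bigr),\\
E(\comp) &= N P_s \comp(\tau_s + \tau_r) + P_t\,\phn\,u(\comp)\,(1 - p_f(\comp)),
\end{align*}
so $\teecccsdet(\lambda^*,\comp,N) = R(\comp)/E(\comp) =: g(\comp)$. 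By the quotient rule, monotonicity reduces to showing $R'(\comp)E(\comp) - R(\comp)E'(\comp) > 0$ on $(0,\compbnd)$.

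The main obstacle will be verifying this sign inequality. Expanding $R'$ and $E'$ by the product rule, each splits into a time contribution carrying the factor $u'=-T_s$ and a probability contribution carrying the favorable factor $-p_f'(\comp)>0$. Collecting terms, the numerator takes the schematic form $-p_f'(\comp)\cdot A(\comp) - T_s\cdot B(\comp)$, where $A$ captures the throughput gain and energy savings that accrue as $p_f$ shrinks, while $B$ captures the shortened transmission window and the linear growth of sensing energy. The delicate step is bounding the negative $B$-contribution by the positive $A$-contribution throughout the interval: near $\comp=0$ this is easy because $-p_f'(\comp) = \tfrac{1}{2}\sqrt{N\gamma/\comp}\,\varphi\bigl(Q^{-1}(\overline{P}_d)+\sqrt{N\comp\gamma}\bigr)$ (with $\varphi$ the standard normal density) diverges, whereas near $\comp=\compbnd$ the bound $p_f(\comp)\le\overline{P}_f$ (together with the IEEE 802.22 regime $\overline{P}_f \ll \overline{P}_d$ and the standing assumption $\ttot > \comp T_s$) keeps $1-(1+\phi)p_f(\comp)$ and $1-p_f(\comp)$ close to unity and permits a direct comparison. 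A convenient route is to work instead with the logarithmic derivative $R'/R - E'/E$, since $R$ is a product of two positive factors and the $u$-derivative contributes the same $-T_s/u$ term to both ratios, leaving only the $p_f$-driven terms to be compared. Once $g'(\comp)>0$ is established, monotonicity forces $\comp^* = \compbnd$; the parenthetical concavity claim would then follow from an analogous sign analysis of $g''(\comp)$ but is not required for the optimum.
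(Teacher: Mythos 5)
Your overall strategy is the same as the paper's (substitute $\lambda^*$ from Theorem~\ref{OptLambdaDetNPThm}, reduce $\teecccsdet$ to a one-variable function of $\comp$, and argue the derivative is positive on $(0,\compbnd)$), but the one step that actually carries the theorem is not established. You correctly note that the numerator sign condition $R'E-RE'>0$ is "the main obstacle," yet your resolution of it is only qualitative: near $\comp=0$ the divergence of $-p_f'(\comp)$ does settle the sign, but near $\comp=\compbnd$ the claim that $p_f\le\overline{P}_f$ "permits a direct comparison" is not a proof, and in fact no unconditional proof exists. Your logarithmic-derivative shortcut also contains a concrete error: the $u$-derivative does \emph{not} contribute the same $-T_s/u$ term to $R'/R$ and $E'/E$, because $E$ contains the sensing/reporting energy $NP_s\comp(\tau_s+\tau_r)$ in addition to $P_t\phn u(1-p_f)$. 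Doing the bookkeeping correctly (take $\phi=0$ for transparency) gives
\begin{align}
\operatorname{sign}\bigl(g'(\comp)\bigr)=\operatorname{sign}\Bigl(\comp\,u(\comp)\bigl(-p_f'(\comp)\bigr)-(1-p_f(\comp))\,\ttot\Bigr),
\end{align}
i.e.\ monotonicity requires $\frac{\comp(-p_f')}{1-p_f}\ge\frac{\ttot}{u}$. At $\comp=\compbnd$ one has $\sqrt{N\comp\gamma}=Q^{-1}(\overline{P}_f)-Q^{-1}(\overline{P}_d)$, so $\frac{\comp(-p_f')}{1-p_f}=\frac{\sqrt{N\comp\gamma}}{2}\cdot\frac{\varphi\left(Q^{-1}(\overline{P}_f)\right)}{1-\overline{P}_f}\approx 0.25$ for $\overline{P}_f=0.1$, $\overline{P}_d=0.9$, while $\ttot/u\ge 1$; the inequality fails. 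With $\phi>0$ an extra favorable term $P_t\phn\phi u^2(-p_f')$ appears, but whether it dominates the residual negative term $\propto NP_s(\tau_s+\tau_r)\,\ttot$ depends on the system parameters. So the sign inequality near $\compbnd$ is genuinely parameter-dependent and cannot be closed by the "close to unity" argument you sketch.

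This is precisely why the paper does not prove the statement unconditionally either: its argument (the one for Theorem~\ref{cthmranNP}, to which the proof of this theorem defers after computing $\partial\pfcccsdet/\partial\comp\le 0$ and its divergence as $\comp\to 0$) lower-bounds $P_f$ via a $Q$-function bound, extracts a sufficient condition $\comp\le\compupbnd=\frac{\ttot W}{T_sW+Y}$ under which the derivative is nonnegative, and then verifies $\compbnd\le\compupbnd$ only numerically for parameters of practical interest (it also relies on the parameter assumption made in the proof of Theorem~\ref{OptLambdaRan}). To repair your proposal you would need to do something analogous: derive an explicit sufficient condition on $(\phi,P_s,P_t,\tau_s,\tau_r,\ttot,N,\gamma,\overline{P}_f,\overline{P}_d)$ under which the numerator stays positive up to $\compbnd$, and state the theorem (or verify the condition numerically) in that regime, rather than asserting the comparison holds throughout $(0,\compbnd)$.
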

\begin{proof}
See Appendix \ref{etathmProof}.
\end{proof} 

A similar argument can be made for the problem in \eqref{OP3withNdet}.

\begin{theorem} \label{etathmdet}
For a given $\comp$, the objective function in the optimization problem $\mathcal{OP}1^{(N)}_{\scriptscriptstyle CCCS, det}$ is monotonically increasing in $N \in (0,\nbnd)$, and hence $N^* = \nbnd$.
\end{theorem}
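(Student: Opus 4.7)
The plan is to reduce this statement to the analogous Theorem~\ref{etathm} by exploiting the structural symmetry between $N$ and $\comp$ in the deterministic-signal expressions. A first observation is that, by \eqref{pdpfequdet}, the probabilities $\pfcccsdet$ and $\pdcccsdet$ depend on $N$ and $\comp$ only through the product $\comp N$, so holding $\comp$ fixed and differentiating with respect to $N$ produces the same behaviour of these probabilities (up to a constant scaling by $\comp$) as appears in the proof of Theorem~\ref{etathm}. I would begin by substituting the optimal threshold $\lambda^{*}$ from Theorem~\ref{OptLambdaDetNPThm}, which forces $\pdcccsdet=\overline{P}_{d}$, so that
\begin{equation*}
\pfcccsdet(N)=Q\!\left(Q^{-1}(\overline{P}_{d}) + \sqrt{\comp N \gamma}\right)
\end{equation*}
becomes a smooth, strictly decreasing function of $N$ on the feasible interval $N\in(0,\nbnd]$.

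Next, I would plug this single-variable expression for $\pfcccsdet$ into the approximations \eqref{ApprxReqn} and \eqref{ECCCSEqn}, with $T_{s}=\tau_{s}+N\tau_{r}$, and thereby express $\teecccsdet(\lambda^{*},\comp,N)$ as a ratio depending only on $N$. Applying the quotient rule, the sign of $\mathrm{d}\teecccsdet/\mathrm{d}N$ is governed by $\trcccsdet'(N)\,\tecccsdet(N)-\trcccsdet(N)\,\tecccsdet'(N)$, because the denominator is strictly positive. The derivative analysis would then proceed in parallel with Theorem~\ref{etathm}: expand both derivatives and sort the resulting contributions into (i) the positive terms generated by the decay of $\pfcccsdet$, which lifts the numerator of $\teecccsdet$ faster than it lifts the denominator, and (ii) the negative terms arising from $\ttot-\comp T_{s}$ shrinking in $N$ together with the linearly growing sensing-plus-reporting energy $N\comp(P_{s}\tau_{s}+P_{t}\tau_{r})$.

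The main obstacle I expect is verifying that group~(i) dominates group~(ii) throughout $(0,\nbnd]$. The key lever is that on the feasible interval, the constraint $\pfcccsdet\leq\overline{P}_{f}$ is automatically in force, since $\nbnd$ is defined precisely by saturating this bound; consequently the factors $1-(1+\phi)\pfcccsdet$ and $1-\pfcccsdet$ admit uniform positive lower bounds close to one, and this is exactly the mechanism that closes the inequality in the proof of Theorem~\ref{etathm}. The one genuine extra complication relative to the $\comp$-case is that $N$ also appears linearly inside $T_{s}=\tau_{s}+N\tau_{r}$, so the bookkeeping on the $\comp T_{s}$ contributions is slightly longer; once that expansion is carried through, strict monotonic increase on $(0,\nbnd)$ is established, which by the Weierstrass argument forces the maximiser to the boundary and yields $N^{*}=\nbnd$ as claimed.
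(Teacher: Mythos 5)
Your overall template---substitute $\lambda^{*}$ from Theorem~\ref{OptLambdaDetNPThm} so that $\pfcccsdet=Q\bigl(Q^{-1}(\overline{P}_{d})+\sqrt{\comp N\gamma}\bigr)$ becomes a decreasing function of $N$, plug into \eqref{ApprxReqn}--\eqref{ECCCSEqn}, apply the quotient rule, and argue the numerator of the derivative is nonnegative on the feasible interval so the maximiser sits at the boundary---is indeed the same route the paper takes (its proof of this theorem is literally ``similar lines'' to Theorem~\ref{cthmranNP}). The genuine gap is the step where you claim the positive terms dominate. You assert that because $\pfcccsdet\le\overline{P}_{f}$ on $(0,\nbnd]$, the factors $1-(1+\phi)\pfcccsdet$ and $1-\pfcccsdet$ are uniformly close to one, and that this is ``exactly the mechanism'' that closes the inequality in Theorem~\ref{etathm}. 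It is not, and it cannot close it: even if $\pfcccsdet$ were identically zero (those factors exactly one), increasing $N$ at fixed $\comp$ still shrinks the transmission window $\ttot-\comp T_{s}$ (since $T_{s}=\tau_{s}+N\tau_{r}$) and grows the sensing/reporting energy linearly in $N$, both of which push $\teecccsdet$ down; the only upward lever is the decay of $\pfcccsdet$, whose benefit is bounded (the throughput factor can never exceed one), so for large enough $N$ the efficiency necessarily turns around. Monotonicity on all of $(0,\nbnd)$ therefore does not follow from the false-alarm constraint alone; it depends on how $\nbnd$ compares with parameter-driven thresholds involving $P_{s},P_{t},\tau_{s},\tau_{r},\ttot,\mathcal{C}$.

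What the paper actually does in the proof of Theorem~\ref{cthmranNP} (which Theorems~\ref{etathm} and \ref{etathmdet} inherit) is different: it lower-bounds $P_{f}$ by a Q-function bound, groups the derivative into terms $X_{1},\dots,X_{4}$, shows $X_{2},X_{3},X_{4}\ge 0$, and shows $X_{1}\ge 0$ only under an additional cap of the form \eqref{RqdFormEqn}, i.e.\ $\comp\le\compupbnd$; the final containment $\compbnd\le\compupbnd$ is conceded to be analytically hard and is verified only numerically, for moderate $N$ and low SNR. Your sketch instead claims an unconditional analytic conclusion that even the paper does not claim, and as written the dominance step would fail without such a condition. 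To repair it, carry out the $N$-analogue of that bookkeeping (now with the extra $N$-dependence entering through $T_{s}$ and through the sensing energy $N\comp P_{s}(\tau_{s}+\tau_{r})$), extract an upper bound $N_{\scriptscriptstyle UB}$ on $N$ below which the derivative is nonnegative, and then argue---at least numerically, as the paper does---that $\nbnd\le N_{\scriptscriptstyle UB}$ in the parameter regime of interest.
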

\begin{proof}
The proof is in similar lines to that of Theorem \ref{cthmranNP}, and is omitted for brevity.
\end{proof}
\section{Numerical Results and Discussion}  \label{SecResults}

\begin{figure*}[t]
\centering
\begin{subfigure}[b]{0.5\textwidth}
\centering
        \includegraphics[height=3in]{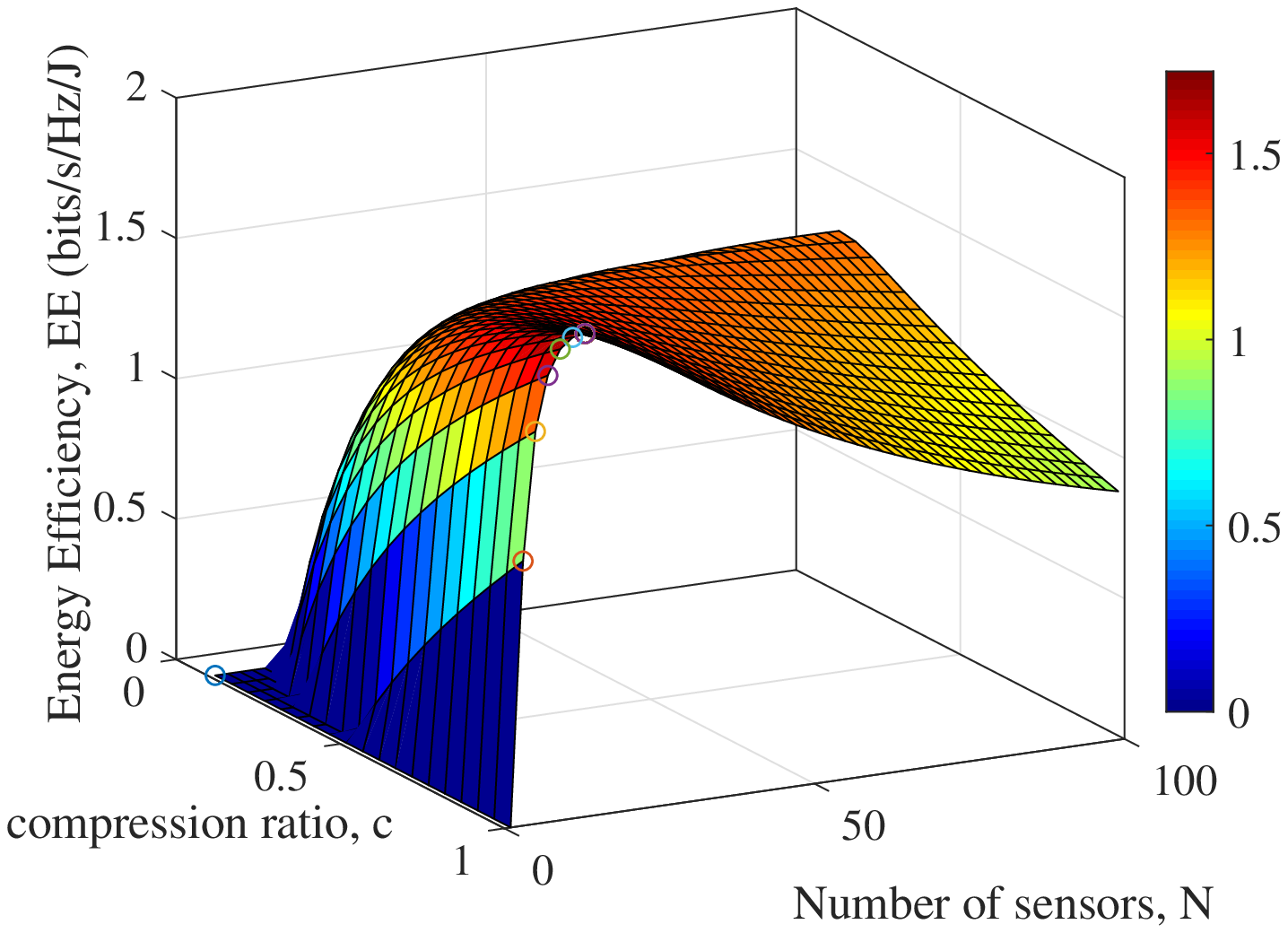}
        \caption{Deterministic PU signal.}
    \end{subfigure}%
    ~ 
    \begin{subfigure}[b]{0.5\textwidth}
        \centering
        \includegraphics[height=3in]{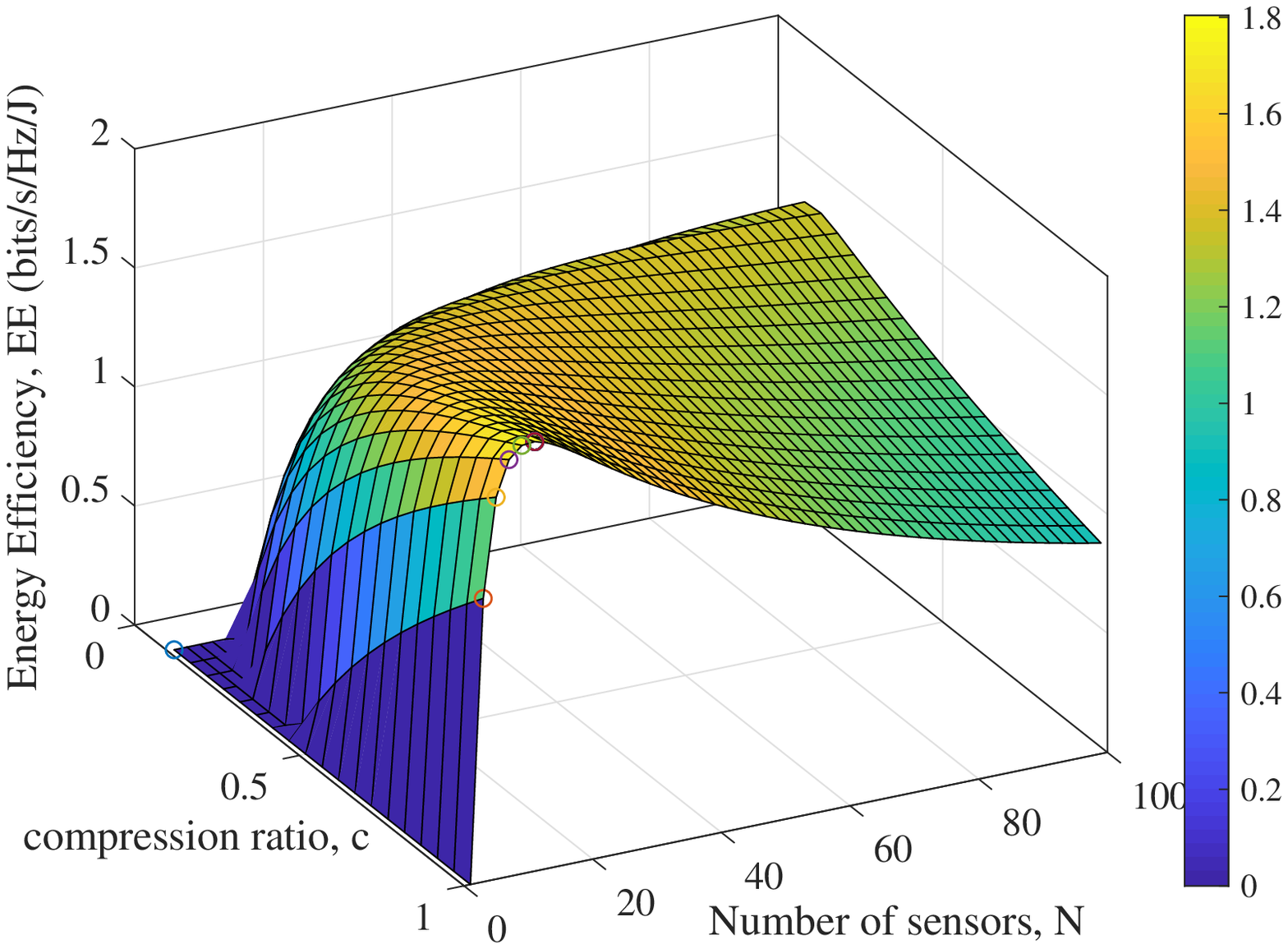}
        \caption{Random PU signal.}
    \end{subfigure}
    \caption{Energy efficiency as a function of number of sensors $N$ and compression ratio $\comp$ for (a) deterministic signal case, SNR = $-3$ dB (b) random signal case SNR = $-9$ dB.}
    \label{3Dplot}
\end{figure*}

\begin{figure*}[t!]
    \centering
    \begin{subfigure}[b]{0.5\textwidth}
        \centering
        \includegraphics[height=2.56in]{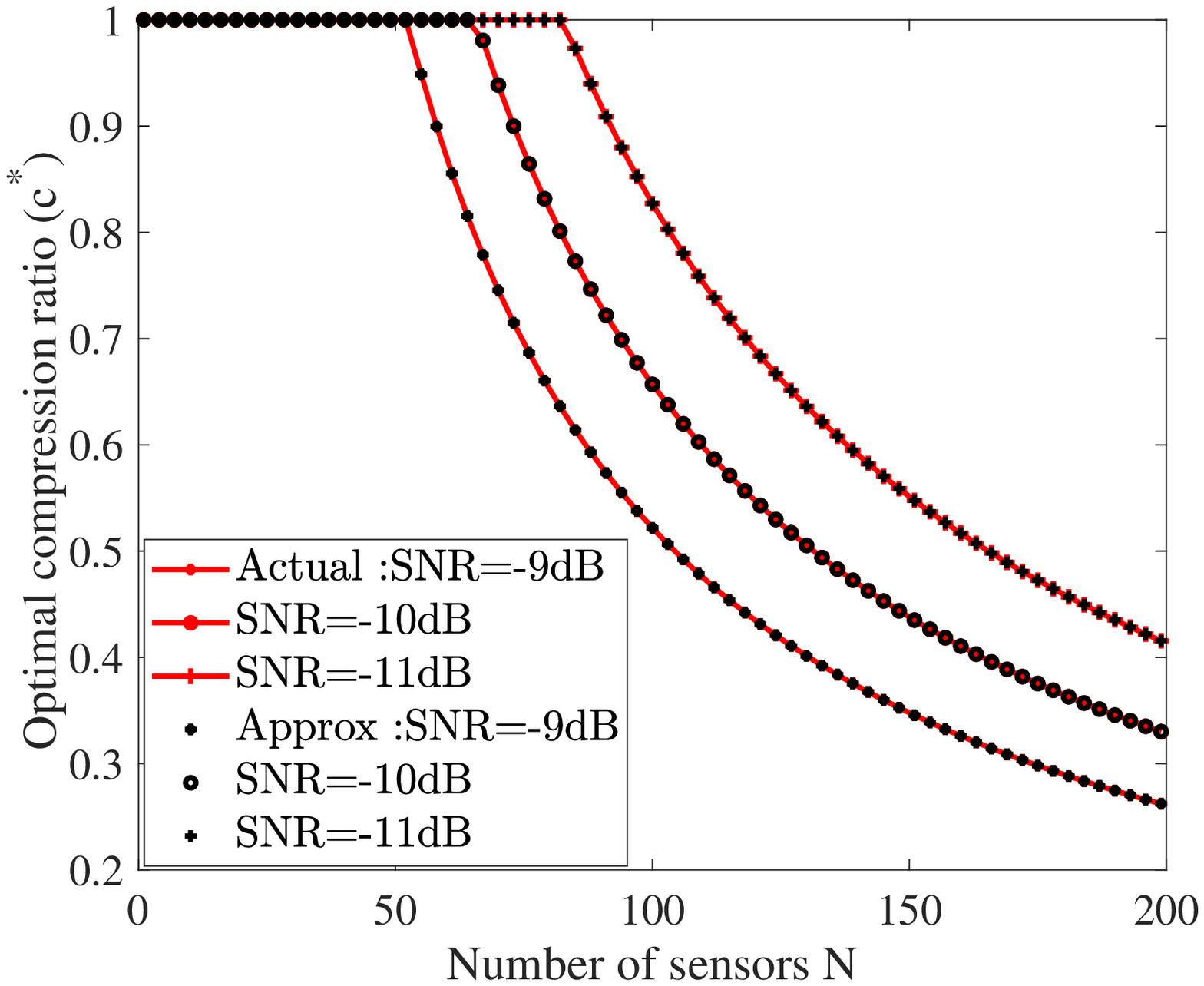}
        \caption{Deterministic PU signal.}
         \label{FigcvsNa}
    \end{subfigure}%
    ~ 
    \begin{subfigure}[b]{0.5\textwidth}
        \centering
        \includegraphics[height=2.56in]{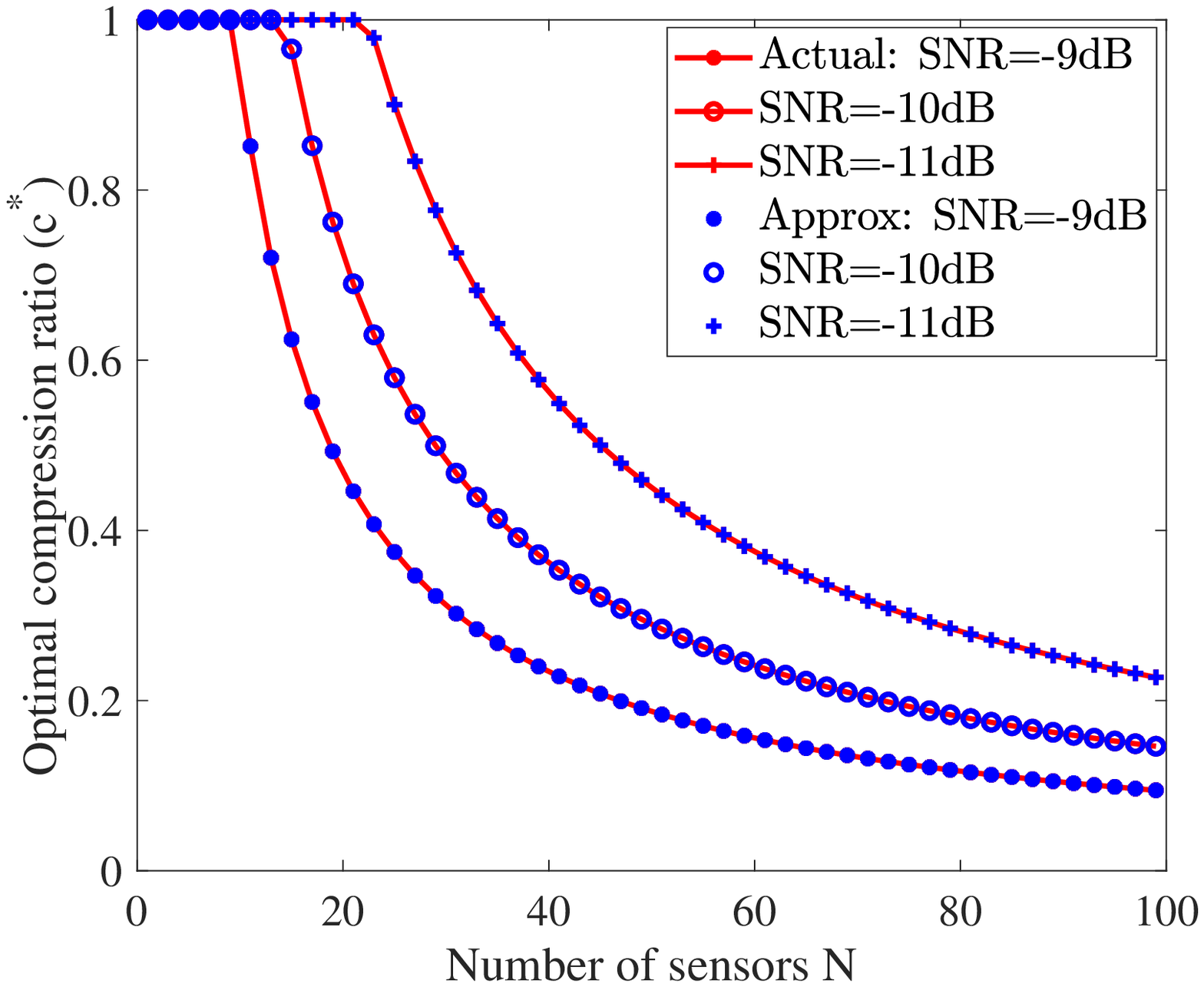}
        \caption{Random signal.}
        \label{FigcvsNb}
    \end{subfigure}
    \caption{Variation of the optimal compression ratio ($\comp^*$) with number of sensors $N$ for (a) deterministic signal case (b) random signal case.}
\end{figure*}

\begin{figure*}[t!]
	\centering
	\begin{subfigure}[b]{0.5\textwidth}
		\centering
		\includegraphics[height=2.57in]{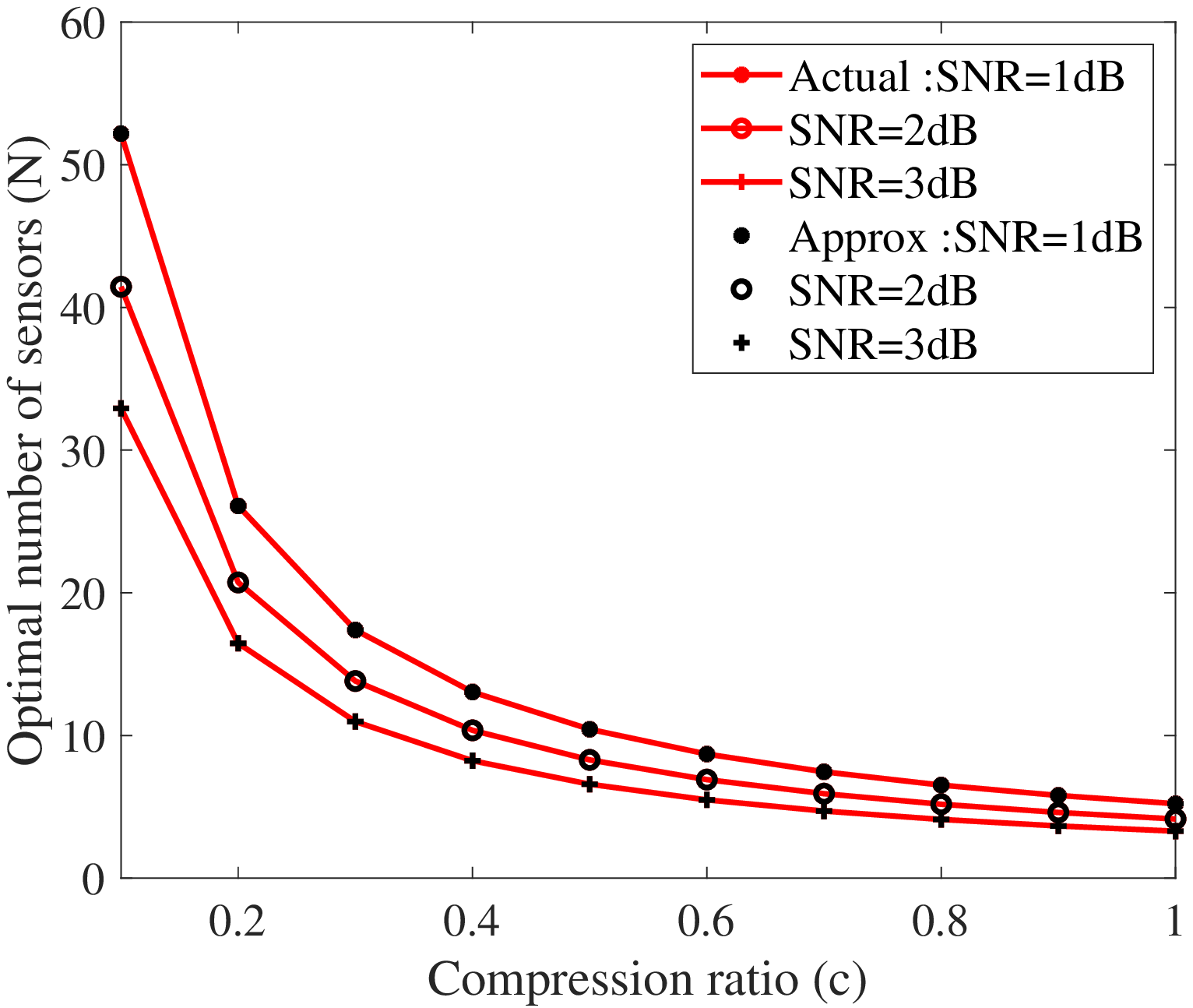}
		\caption{Deterministic PU signal.}
		\label{FigNvsca}
	\end{subfigure}%
	~ 
	\begin{subfigure}[b]{0.5\textwidth}
		\centering
		\includegraphics[height=2.56in]{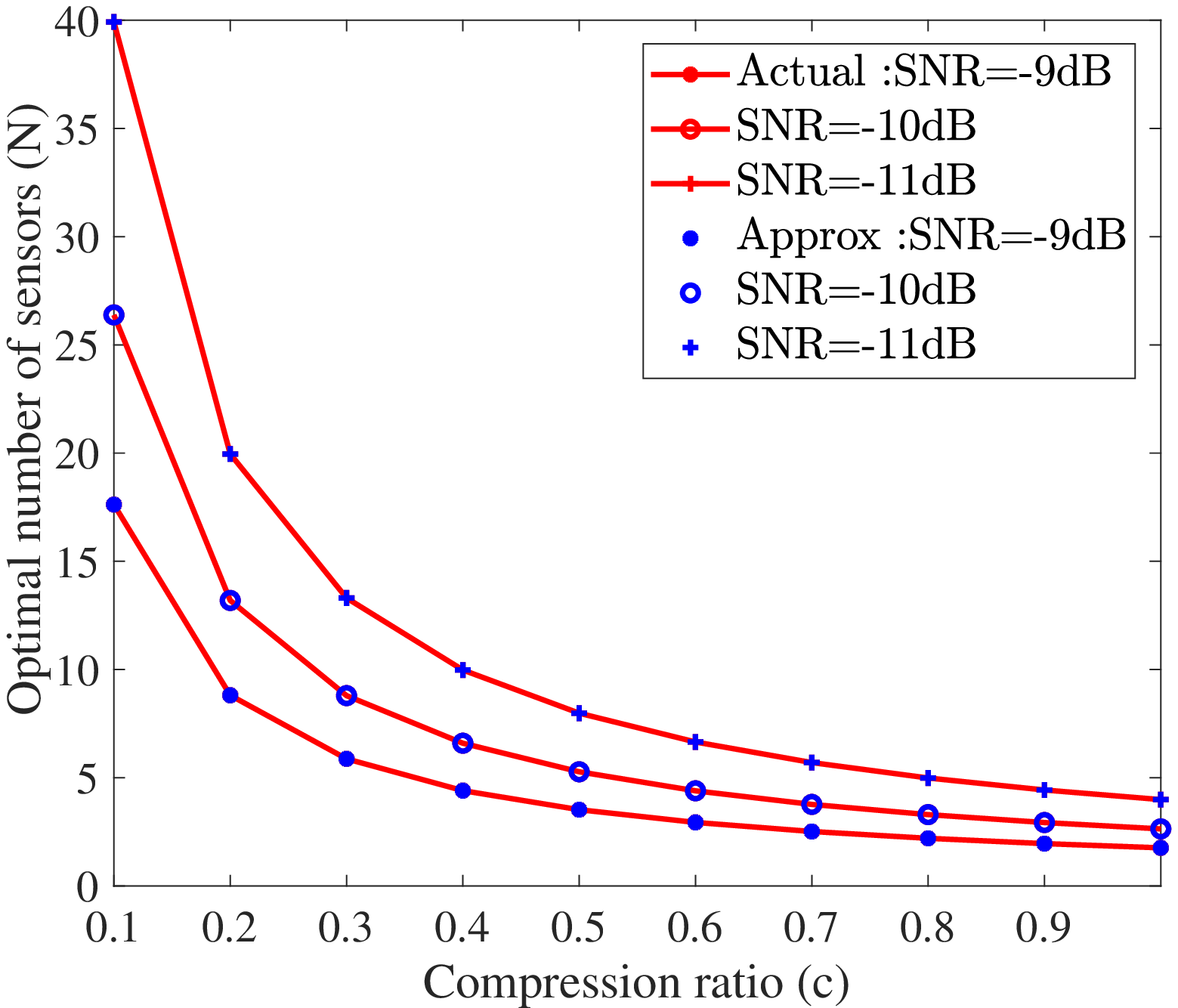}
		\caption{Random PU signal.}
		\label{FigNvscb}
	\end{subfigure}
	\caption{Variation of the optimal number of sensors ($N^*$) with compression ratio $\comp^*$ for (a) deterministic signal case (b) random signal case.}
\end{figure*}

\begin{figure*}[t!]
    \centering
    \begin{subfigure}[b]{0.5\textwidth}
        \centering
        \includegraphics[height=2.56in]{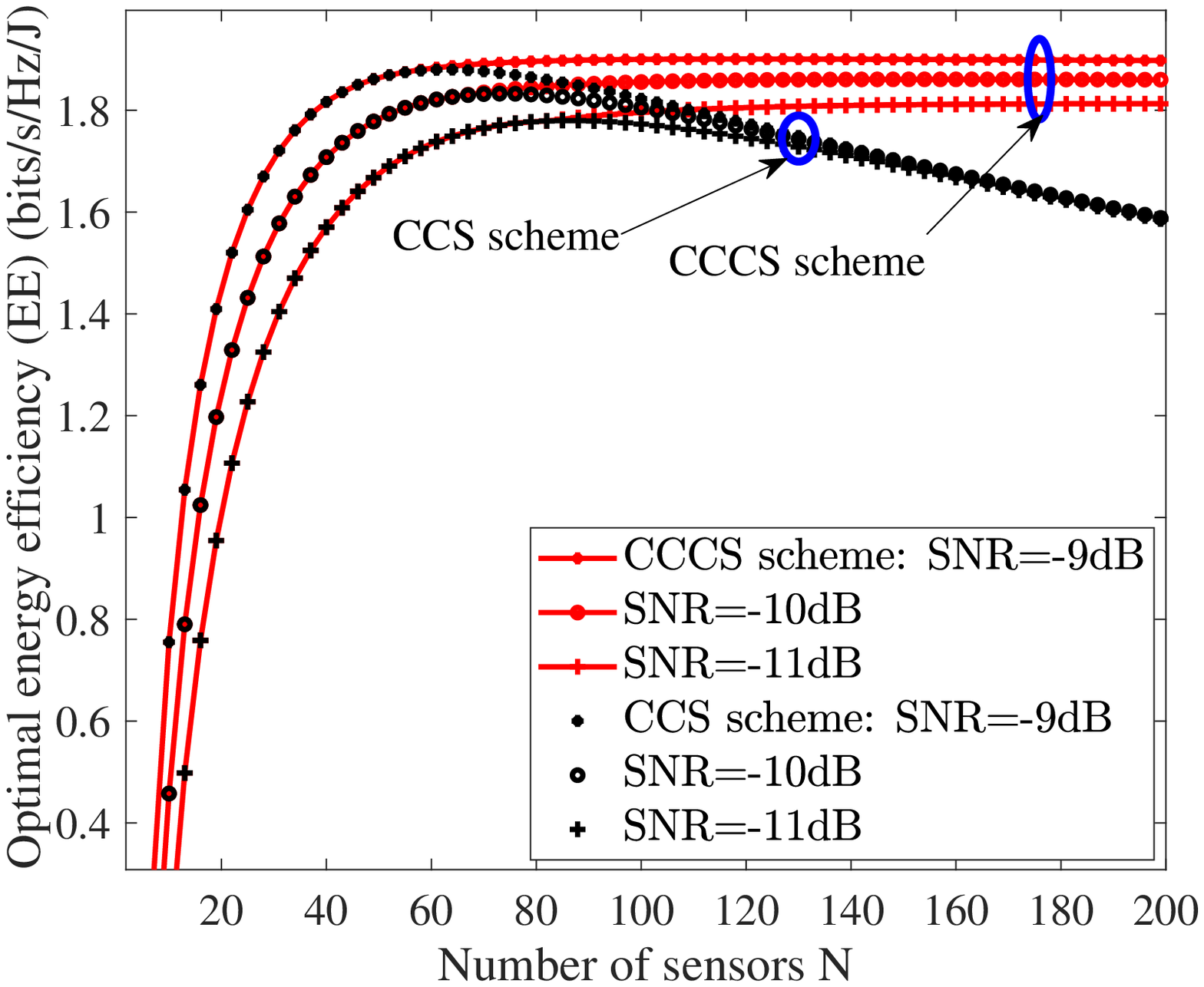}
        \caption{Deterministic PU signal.}
        \label{EEplotdet}
    \end{subfigure}%
    ~ 
    \begin{subfigure}[b]{0.5\textwidth}
        \centering
        \includegraphics[height=2.56in]{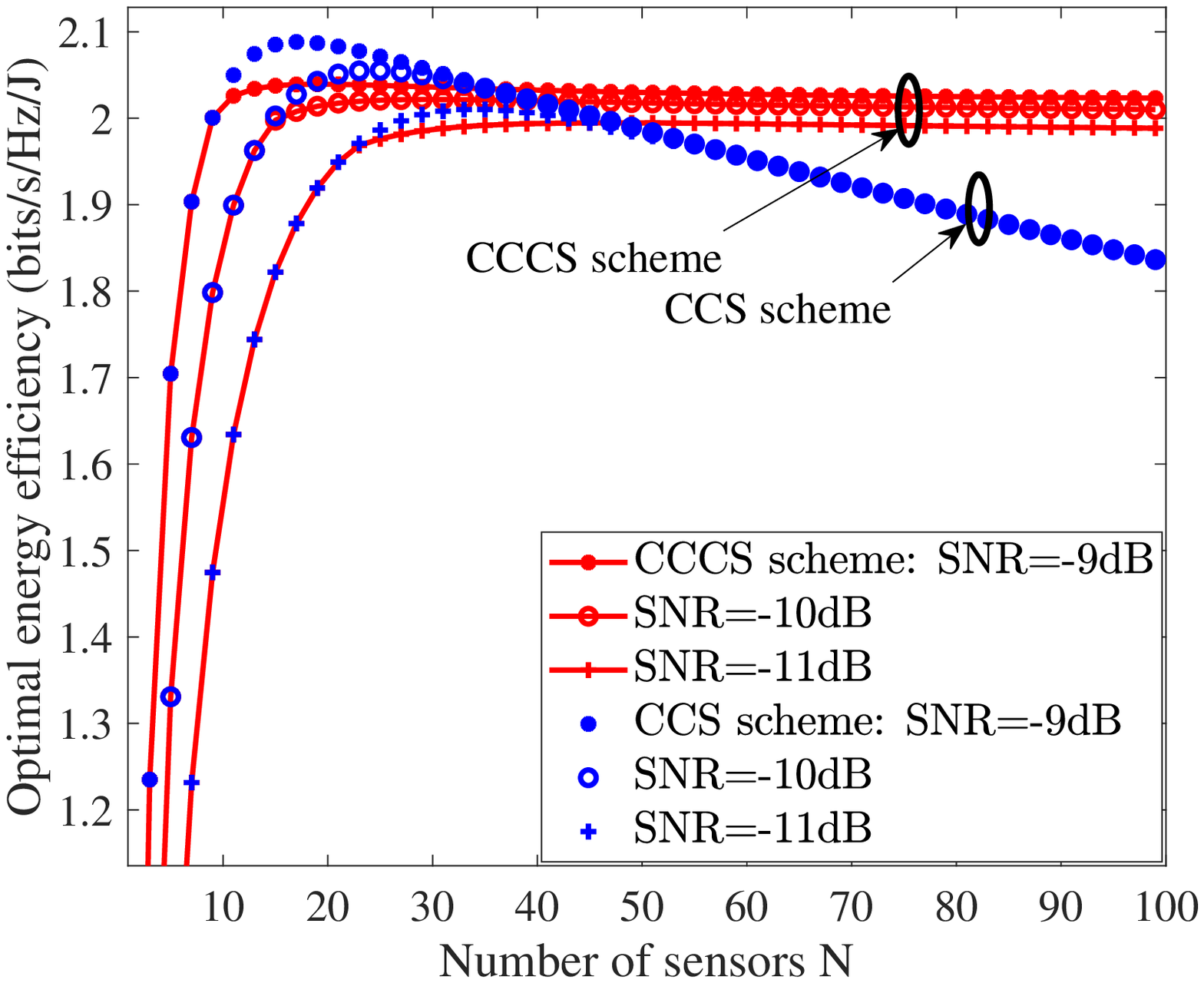}
        \caption{Random PU signal.}
        \label{EEplotran}
    \end{subfigure}
    \caption{Variation of optimal energy efficiency with number of sensors $N$ for (a) deterministic signal case (b) random signal case.}
    
\end{figure*}

\begin{figure*}[t!]
    \centering
    \begin{subfigure}[b]{0.5\textwidth}
        \centering
        \includegraphics[height=2.56in]{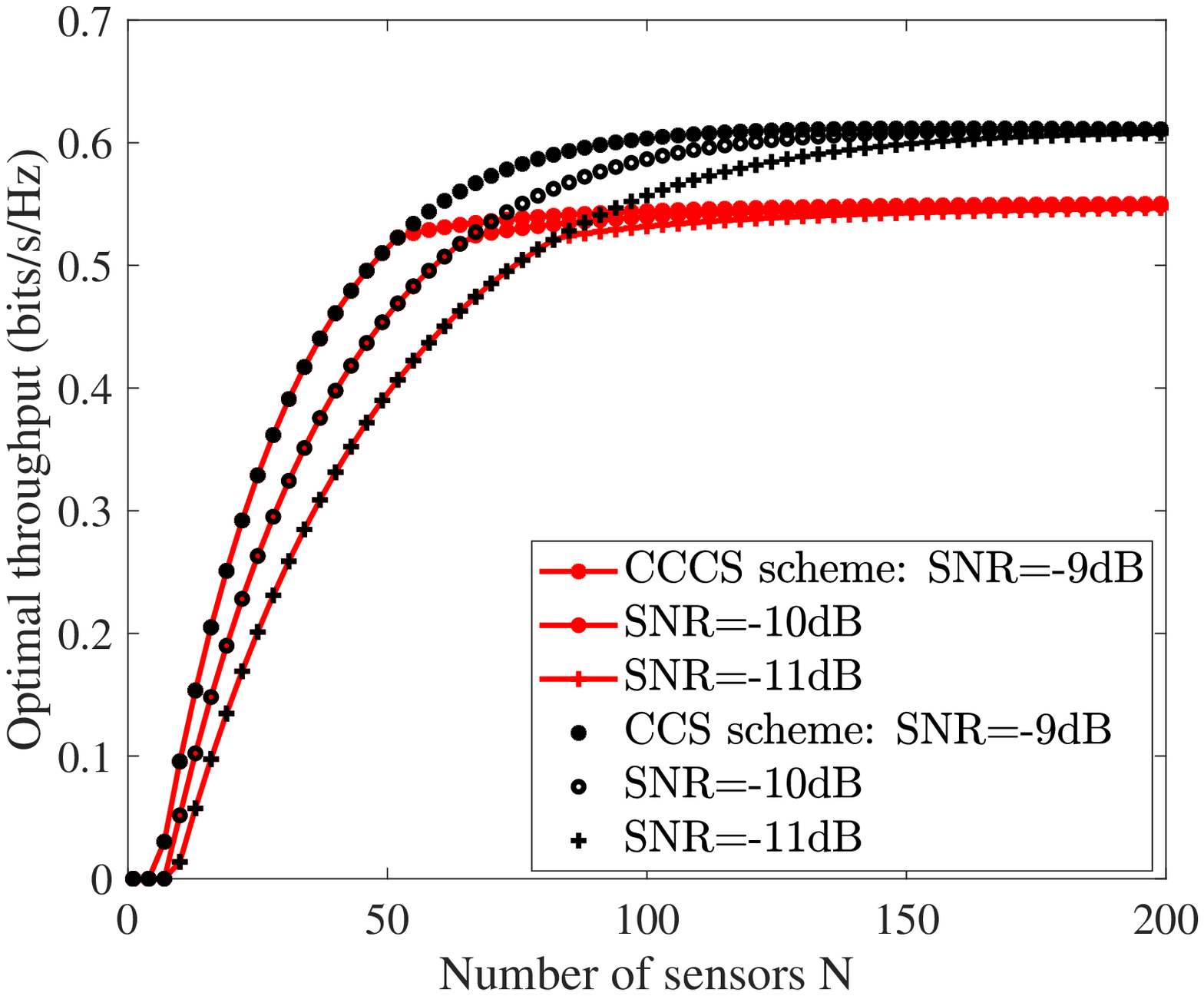}
        \caption{Deterministic PU signal.}
        \label{Rdet}
    \end{subfigure}%
    ~ 
    \begin{subfigure}[b]{0.5\textwidth}
        \centering
        \includegraphics[height=2.56in]{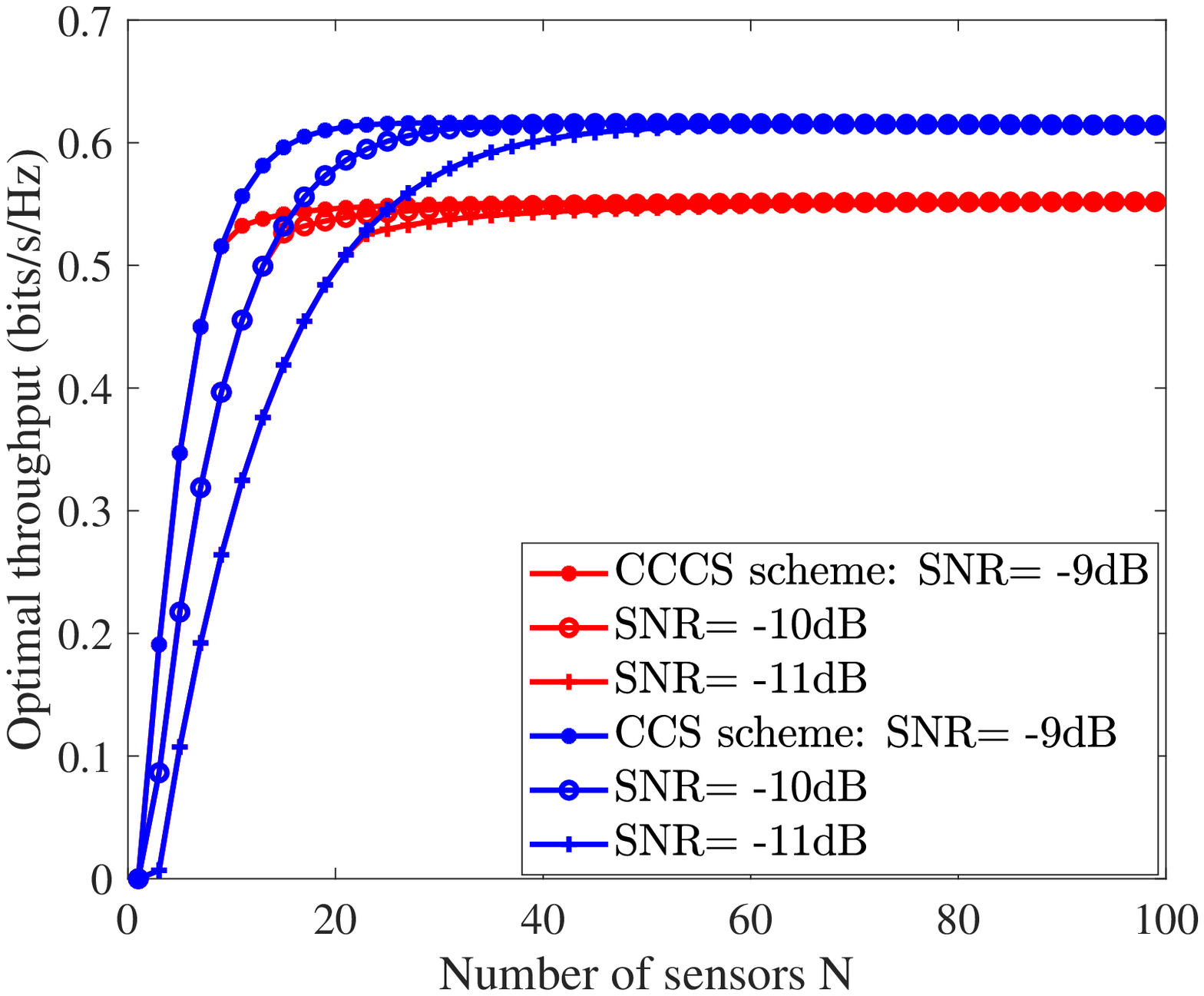}
        \caption{Random PU signal.}
    \end{subfigure}
     \label{Rran}
    \caption{Variation of optimal achievable throughput with number of sensors $N$ for (a) deterministic signal case (b) random signal case.}
     \label{Rcomp&uncomp}
\end{figure*}

\begin{figure*}[t!]
    \centering
    \begin{subfigure}[b]{0.5\textwidth}
        \centering
        \includegraphics[height=2.56in]{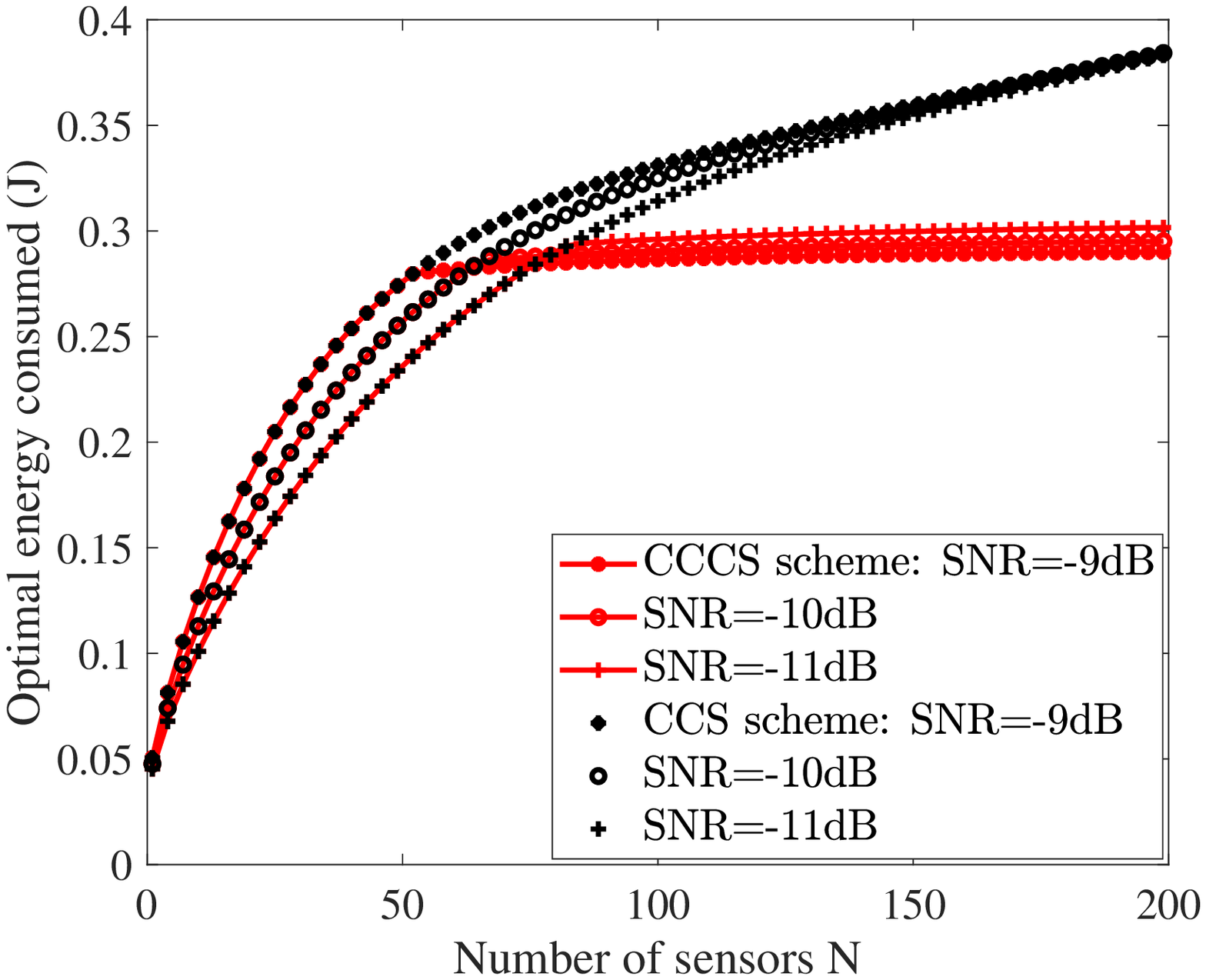}
        \caption{Deterministic PU signal.}
    \end{subfigure}%
    ~ 
    \begin{subfigure}[b]{0.5\textwidth}
        \centering
        \includegraphics[height=2.56in]{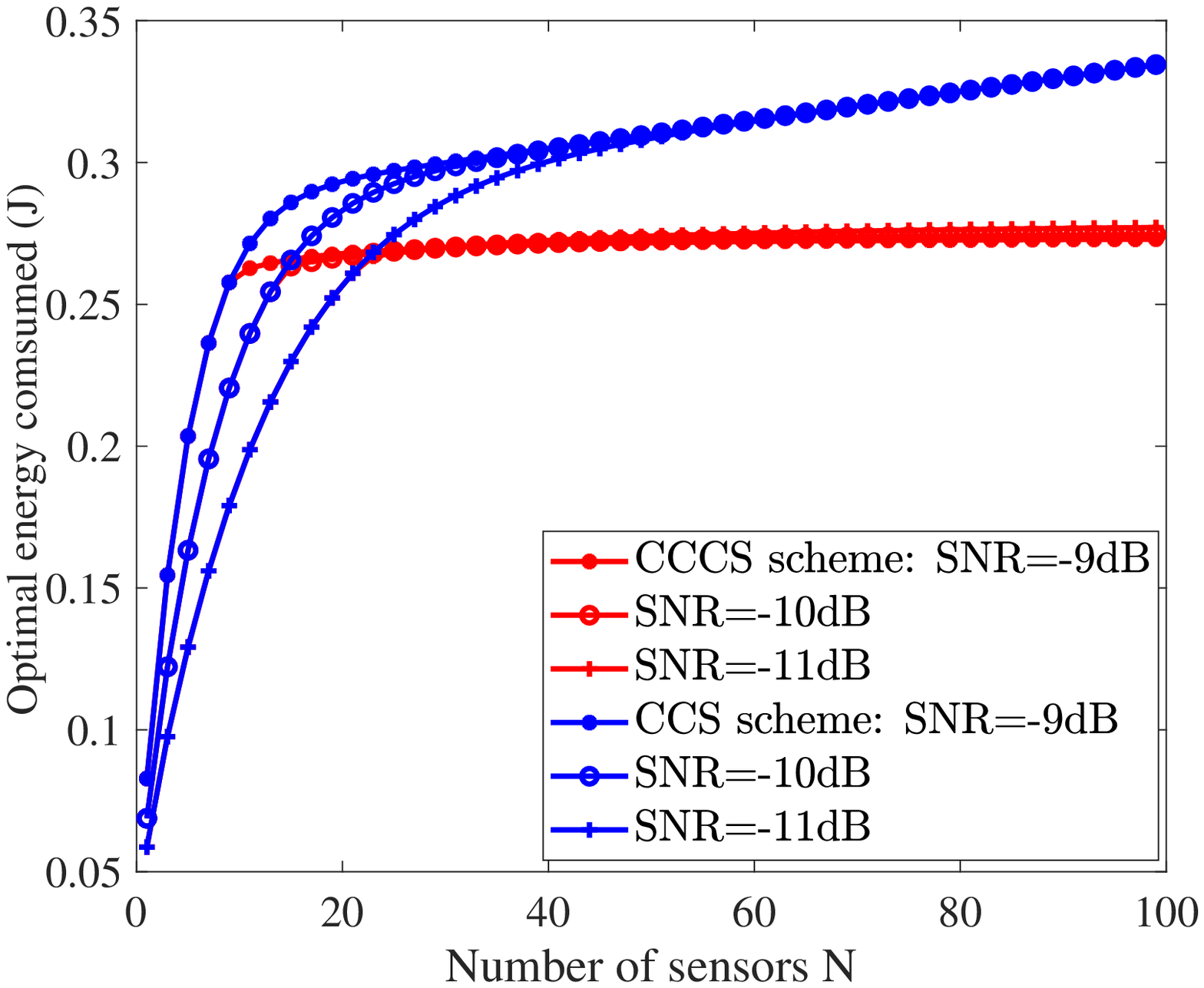}
        \caption{Random PU signal.}
    \end{subfigure}
    \caption{Variation of optimal energy consumption with number of sensors $N$ for (a) deterministic signal case (b) random signal case.}
    \label{energy}
\end{figure*}

In this section, we study the performance of the CCCS technique in comparison with the CCS technique, in terms of energy efficiency and validate our analysis, through numerical techniques. The parameter values are fixed as follows. The target probability of detection, $\overline{P}_{d}$, and false-alarm probability, $\overline{P}_{f}$, are fixed to be $0.9$ and $0.1$, respectively. The prior probabilities $\phn$ and $\pho$ are set to be $0.5$ each. The total frame duration is assumed to be $\ttot=200$ ms \cite{Kishore_Adhoc_2017}. The sampling frequency at the local SUs is assumed to be $f_{s}=1$ MHz, and the sensing power $P_{s}=0.1$ W. The length of the uncompressed received signal vector, $P=100$. The sensing time, $\tau_{s}$, and reporting time, $\tau_{r}$, for the CCCS scheme are set to $30$ ms and $100$ $\mu$s, respectively. The achievable rate of secondary transmission is chosen to be $\mathcal{C} = \log_2(1+\text{SNR}_s) = 6.6582$ bits/sec/Hz, where the SNR at the secondary receiver is assumed to be $\text{SNR}_s = 20$ dB. The transmission power of individual sensors, $P_{t}$, is assumed to be $3$ W. Also, we set the partial throughput factor, $\kappa_c$, and the penalty factor, $\phi$, to be $0.5$ each.

Figure \ref{3Dplot} shows the variation of energy efficiencies for the random and deterministic signal cases, as a function of parameters $\comp$ and $N$. Observe that the energy efficiency is concave in both $\comp$ and $N$. Furthermore, it can be seen that as $N$ increases, $\comp$ decreases, which indicates a better compression. Also, the maximum energy efficiency can be improved with $N$.

Figures \ref{FigcvsNa} and \ref{FigcvsNb} show the variation of the optimal compression ratio $\comp^{*}$ for the CCCS scheme, as a function of $N$ for different values of SNR $\gamma$. First, note that the optimal values of $\comp$ are nearly equal for the actual and approximate energy efficiency values, thereby establishing our earlier claim on the tightness of our approximations involved in evaluation of energy efficiency. The decrease in $\comp^{*}$ with an increase in $N$ is intuitive, since the loss due to compression is recovered in CCCS by increasing $N$, which results in a better throughput, and consequently, a better energy efficiency. Similarly, in Fig.~\ref{FigNvsca} and \ref{FigNvscb}, we consider the variation of optimal $N$ for different values of $\comp$, which yields similar trends and observations.

Figures \ref{EEplotdet} and \ref{EEplotran} show the variation of optimal energy efficiency values with the actual and approximate expressions, for different values of $N$. Note that for low values of $N$, performances of both CCS and CCCS schemes are similar, due to the fact that $\comp^* = 1$ for sufficiently low $N$. As $N$ increases, the system achieves a better compression, and therefore, the performance of CCCS scheme becomes better than that of the CCS scheme. Also, the energy efficiency for both CCS and CCCS schemes increase with an increase in SNR. Moreover, the loss due to the energy efficiency approximation is negligible. Therefore, in our subsequent results, we consider only the approximated energy efficiency values. The reason for a better energy efficiency of the CCCS scheme in comparison to the CCS scheme can possibly be either because CCCS achieves a better throughput, or it achieves a lower energy consumption. Between these two cases, since the detection performance of the CCS scheme is better than that of CCCS scheme for a given $N$ (or $\comp$), the achievable throughput of the CCS scheme will always be higher as compared to that of the CCCS scheme. Therefore, the improvement in the energy efficiency of the CCCS scheme must be due to a significant reduction in energy consumption in comparison to the CCS scheme. Figures \ref{Rcomp&uncomp} and \ref{energy} corroborate this argument. In Fig.~\ref{Rcomp&uncomp}, the achievable throughput of CCS and CCCS schemes are compared, where the former is naturally found to be better. For larger values of $N$, the detection probability and hence the throughput of the CCS scheme improves faster. However, as shown in Fig.~\ref{energy}, the energy consumption of the CCS scheme also increases rapidly with $N$, as opposed to the CCCS scheme, where the increase is much slower since $\comp^*$ decreases with $N$. This is true for both random signal and deterministic signal cases. Hence, in scenarios where the energy consumption has a larger priority in a signal detection scenario CCCS scheme could be preferred. However, in the scenario where the sensing accuracy is a main concern, CCS scheme yields a better performance, in terms of energy efficiency.

\section{Conclusion} \label{SecConc}
We consider the energy efficiency of compressed conventional collaborative sensing (CCCS) scheme focusing on balancing the tradeoff between energy efficiency and detection accuracy in cognitive radio environment. We first consider the existing CCCS scheme in the literature, and derive the achievable throughput, energy consumption and energy efficiency. The energy efficiency maximization for the CCCS scheme is posed as a non-convex, optimization problem. We approximated the optimization problem to reduce it to a convex optimization problem, and showed that this approximation holds with sufficient accuracy in the regime of interest. We analytically characterize the tradeoff between dimensionality reduction and collaborative sensing of CCCS scheme -- the implicit tradeoff between energy saving and detection accuracy, and show that by combining compression and collaboration the loss due to one can be compensated by the other which improves the overall energy efficiency of the cognitive radio network.

\section{Appendix}
\subsection{Proof of Theorem \ref{OptLambdaRan}} \label{OptLambdaRanNPThmProof}
To establish that $\pdcccs \geq \overline{P}_{d}$ is satisfied with equality, we show that $\frac{\partial \teecccs(\lambda, \comp, N)}{\partial \lambda} \geq 0$, for all $\lambda$. Observe that
\begin{align}
\frac{\partial \teecccs(\lambda, \comp, N)}{\partial \lambda} = \frac{\frac{\partial \trcccs(\lambda)}{\partial \lambda} \tecccs(\lambda)-\trcccs(\lambda)\frac{\partial\tecccs(\lambda)}{\partial \lambda}}{\tecccs^2(\lambda)}, \label{firstdriEE}
\end{align}
where
\begin{align}
\frac{\partial \trcccs(\lambda, \comp, N)}{\partial \lambda}=-\frac{\partial P_{f}}{\partial \lambda}(1+\phi)\phn \mathcal{C} (\ttot-\comp T_{s}), \label{firstdriR}
\end{align}
and
\begin{align}
\frac{\partial \tecccs(\lambda, \comp, N)}{\partial \lambda}=-\frac{\partial P_{f}}{\partial \lambda}\phn P_{t}(\ttot-\comp T_{s}). \label{firstdriE}
\end{align}
Upon further simplification, we get
\begin{align}
\frac{\partial \tecccs(\lambda, \comp, N)}{\partial \lambda} = -\frac{\partial P_{f}}{\partial \lambda} V_{1}(\lambda, \comp, N),
\end{align}
where
\begin{align}
& V_{1}(\lambda, \comp, N)=\left [\hspace{-0.1cm}  \frac{(1+\phi)\phn \mathcal{C} (\ttot-\comp T_{s}) \tecccs(\lambda, \comp, N)}{\tecccs^{2}(\lambda, \comp, N)} \right. \nonumber \\
& ~~~~~~~~~~~~~~~~~~ \left. - \frac{\phn P_{t}(\ttot-\comp T_{s})*\trcccs(\lambda, c, N)}{\tecccs^{2}(\lambda, \comp, N)}\right ]
\end{align}
Now, to show that $\frac{\partial \eecccs(\lambda, \comp, N)}{\partial \lambda} \geq 0$, it is enough to show that $V_{1}(\lambda, \comp, N) \geq 0$, since 
\begin{align*}
\frac{\partial P_{f}}{\partial \lambda}=-\frac{1}{2\varn\sqrt{\comp NP\pi}}\exp\left[-\frac{\left(\frac{\lambda}{\varn}-\comp NP\right)^2}{(4 \comp NP)}  \right ] \leq 0.
\end{align*}
In general, it is hard to analytically show that $V_1(\lambda, \comp, N) \geq 0$. However, since $\trcccs(\lambda, \comp, N) \geq 0$ and $\tecccs(\lambda, \comp, N) \geq 0$, the parameters $\phi, \mathcal{C}, \ttot$ and $T_s$ can be chosen such that $(1+\phi)\phn \mathcal{C} (\ttot-\comp T_{s}) \tecccs(\lambda, \comp, N) \geq \phn P_{t}(\ttot-\comp T_{s})\trcccs(\lambda, \comp, N)$. Later, in Sec.~\ref{SecResults}, it can be seen that the above condition is satisfied for those parameter values which are of practical interest. Therefore,
\begin{align}
& \overline{P}_d = Q\left(\frac{\frac{\lambda^*}{\vars+\varn} - \comp N P}{\sqrt{2 \comp N P}}\right) \nonumber \\
& \phantom{\overline{P}_d} = Q\left(\frac{\frac{\lambda^*}{\varn} \left( \frac{1}{1+\gamma} \right) - \comp N P}{\sqrt{2 \comp N P}}\right).
\end{align}
Rearranging the equation gives the expression for $\lambda^*$.

\subsection{Proof of Theorem \ref{cthmranNP}} \label{cthmranNPProof}
Note that
\begin{align}
\frac{\partial \teecccs(\lambda, \comp, N)}{\partial \comp} = \frac{\frac{\partial \trcccs(\comp)}{\partial \comp} \tecccs(\comp)-\trcccs(\comp)\frac{\partial\tecccs(\comp)}{\partial \comp}}{\tecccs^2(\comp)}. \label{firstdriEEwrtc}
\end{align} 
As $\comp \rightarrow 0$, it can be shown that
\begin{align}
\underset{\comp \rightarrow 0}{\lim}\frac{\partial \teecccs(\lambda, \comp, N)}{\partial \comp} \geq  \underset{\comp \rightarrow 0}{\lim} \left \{ - \frac{\partial P_{f}}{\partial \comp}  \left(\frac{\mathcal{C}}{P_t}\right) + V_{2}(\comp, N) \right\}, \label{EEdrivwrtcV}
\end{align}	
where
\begin{align} \label{eq:test38} 
& V_{2}(\comp, N) = \frac{\left [NP_s\tau_s+NP_t\tau_r\right]\phn\mathcal{C} }{P_t^2} \geq 0
\end{align} 
Also, note that
\begin{align}
&\frac{\partial P_f}{\partial  \comp}=-\frac{1}{\sqrt{\pi}}\exp\left(\frac{(\frac{\lambda^*}{\sigma_{w}^2}-\comp NP)^2}{4\comp NP}\right)\nonumber \\
&~~~~~~~~~~~~~~~~\left [-\frac{NP}{2\sqrt{\comp NP}}-\frac{NP((\frac{\lambda^*}{\sigma_{w}^2}-\comp NP))}{4(\comp NP^{3/2})}\right ] \label{dfdcwrtc} 	
\end{align}
Therefore, $P_f$ is a monotonically decreasing function of $\comp$. When $\comp \rightarrow  0$, it can be shown that $\frac{\partial P_f(\lambda, \comp, N)}{\partial  \comp} \rightarrow - \infty$. Since $V_2(\comp, N)$ is a positive constant, $\underset{\comp \rightarrow 0}{\lim}\frac{\partial \teecccs(\lambda, \comp, N)}{\partial \comp} = +\infty$. Furthermore, using a well-known bound on the Q function, we get the following lower bound $P_f$ as
\begin{align}
P_{f} \geq \left[1-\frac{2cNP}{(\frac{\lambda}{\sigma_w^2}-cNP)^2} \right] \exp{-\left[ \frac{(\frac{\lambda}{\sigma_w^2}-cNP)^2}{4cNP}\right ]}, \label{firstdir}
\end{align}
which can be used to get a lower bound on the first derivative of $\teecccs(\lambda, \comp, N)$ as
\begin{align}
&\frac{\partial \teecccs(\lambda, \comp, N)}{\partial \comp}\geq \underbrace{(BA-BD-BC-AE)}_{\triangleq X_1} \nonumber \\ &\hspace{-0.3cm}+\underbrace{(BC+AE-2BA+2BD)\left[1-\frac{2cNP}{(\frac{\lambda}{\sigma_w^2}-cNP)^2} \right] e^{-\left[ \frac{(\frac{\lambda}{\sigma_w^2}-cNP)^2}{4cNP}\right ]}}_{\triangleq X_2}\nonumber \\ &\hspace{-0.1cm}+\underbrace{(BA-BD)\left[1-\frac{2cNP}{(\frac{\lambda}{\sigma_w^2}-cNP)^2} \right]^2 e^{-\left[ \frac{(\frac{\lambda}{\sigma_w^2}-cNP)^2}{4cNP}\right ]}}_{\triangleq X_3} \nonumber \\ &-\underbrace{(AC) \frac{\partial P_f(\lambda, \comp, N)}{\partial \comp}}_{\triangleq X_4},
\end{align}
where
$A = \phn \mathcal{C} \left[\ttot- \comp T_s   \right] \geq 0$, $B =\phn P_t T_s \geq 0$, $C =NP_{s}\comp \tau_s + NP_t \comp \tau_r \geq 0$, and $D =P_t \left[\ttot- \comp T_s   \right] \phn \geq 0$.

As seen earlier, $\frac{\partial P_f}{\partial \comp}$ is negative, and it is easy to show that $BC+AE-2BA+2BD >0, BA-BD >0$, and consequently, $X_2 \geq 0$, $X_3 \geq 0$ and $X_4 \geq 0$. Now,
\begin{align} 
& \frac{\partial \teecccs(\lambda, \comp, N)}{\partial \comp} \geq X_1 +X_2+X_3+X_4 \nonumber \\
& ~~ \geq X_1 \nonumber \\
& ~~ = BA-BD-BC-AE \nonumber \\
& ~~ = (\phn^2 P_t T_s)\phn \mathcal{C} (\ttot-\comp T_s)-(\phn^2 P_t^2 T_s)(\ttot-\comp T_s)\nonumber \\
& ~~~~~~~~~ -(\phn P_t T_s)(NP_sc\tau_s+NP_tc\tau_r) \nonumber \\
& ~~~~~~~~~ -\hspace{-0.1cm}\phn\hspace{-0.01cm} \mathcal{C} (\ttot-\comp T_s) (NP_s\tau_s+\hspace{-0.1cm}NP_t \tau_r))\nonumber \\
& ~~ = \underbrace{\phn \left \{P_t T_s \mathcal{C}-P_t^2 T_s-\mathcal{C} (NP_s \tau_s+NP_t \tau_r) \right\}}_{\triangleq W} (\ttot \hspace{-0.1cm} - \hspace{-0.1cm} \comp T_s) \nonumber \\
& ~~~~~~~~~~ - \comp \underbrace{\phn P_tT_s(NP_s \tau_s+NP_t \tau_r)}_{\triangleq Y} \label{RqdFormEqn}
\end{align}
To ensure that $\frac{\partial \teecccs(\lambda, \comp, N)}{\partial \comp} \geq 0$, we need that the right hand side of \eqref{RqdFormEqn} to be $\geq 0$. Rearranging \eqref{RqdFormEqn}, observe that this is true when $\comp \leq \compupbnd \triangleq \frac{\ttot W}{T_s W+Y}$. In other words, we have shown that $\frac{\partial \teecccs(\lambda, \comp, N)}{\partial \comp} \geq 0$ whenever $\comp \in (0, \compupbnd)$. Finally, to establish that $\comp^* = \compbnd$, we need to show that $\compbnd \leq \compupbnd$. Although hard to show analytically, it is verified to be indeed true numerically, for moderate values of $N$ and for low SNR, which is of practical relevance.

\subsection{Proof of Theorem \ref{OptLambdaDetNPThm}} \label{OptLambdaDetNPThmProof}
Note that the first derivative of $P_f$ is negative as given below.
\begin{align}
\frac{\partial P_{f}}{\partial \lambda}=-\frac{1}{\varn\sqrt{2\pi \comp N \gamma}}\exp\left [\frac{-\lambda^2}{(2\comp N \gamma \sigma_{w}^{4})}  \right ] \leq 0
\end{align}
As mentioned earlier, since the expressions for average achievable throughput, average energy consumption and the energy efficiency expressions across all four scenarios $\textbf{S1}-\textbf{S4}$ for the deterministic case remains similar to the previous case, and similar set of arguments hold true for the deterministic case too. These can be used to prove that the first derivative of $\teecccs$ is greater than or equal to $0$. Therefore,
\begin{align}
& \overline{P}_d = Q\left(\frac{{\lambda^*} - \comp N \varn}{\varn\sqrt{ \comp N \gamma}}\right).
\end{align}
Rearranging the above equation gives the expression for $\lambda^*$.

\subsection{Proof of Theorem \ref{etathm}} \label{etathmProof}
Note that the first derivative of $\pfcccsdet$ with respect to $\comp$ from \eqref{pdpfequdet} is given by,
\begin{equation}  \label{firstdirpf1} 
\frac{\partial P_{f}(\lambda, N,\comp)}{\partial \comp}=-\frac{\exp(-(\frac{\lambda^2}{2N \gamma \sigma_{w}^{4}})N\gamma \lambda}{2\sqrt{2\pi}(cN\gamma)^{3/2} \varn} \leq  0.
\end{equation}
Therefore, $\underset{\comp\rightarrow 0}{\lim}\frac{\partial P_f(\lambda, \comp, N)}{\partial  \comp}\rightarrow - \infty$. Similar arguments given in Sec.~\ref{cthmranNPProof} can be used to show that $\comp^* = \compbnd$, even in this case.
 
\bibliographystyle{IEEEtran}
\bibliography{IEEEabrv,CCCbiblonew}
\end{document}